\def\qed{\rule{2mm}{2mm}}
\newtheorem{theorem}{Theorem}[section]
\newtheorem{lemma}{Lemma}[section] 
\newtheorem{corollary}{Corollary}[section]
\theoremstyle{definition}
\newtheorem{remark}{Remark}[section]
\newtheorem{assumption}{Assumption}[section]
\DeclareMathOperator{\var}{Var}
\DeclareMathOperator{\cov}{Cov}
\DeclareMathOperator{\diag}{diag}
\DeclareMathOperator{\tr}{tr}
\DeclareMathOperator{\blp}{BLP}
\begin{document}

\title{\vspace{-.5in}A New Design-Based Variance Estimator for Finely Stratified Experiments \thanks{We thank Xinran Li and seminar participants at the University of Southern California for helpful comments. We thank Juri Trifonov for excellent research assistance.  The fourth author acknowledges support from the National Science Foundation through grant SES-2419008. The fifth author acknowledges support from the National Science Foundation through the grant SES-2149408.}}

\author{Yuehao Bai \\
Department of Economics \\
University of Southern California \\
\url{yuehao.bai@usc.edu}
\and
Xun Huang \\
Department of Economics \\
University of Chicago \\
\url{xhuang520@uchicago.edu}
\and
Joseph P.\ Romano\\
Departments of Economics \& Statistics \\
Stanford University\\
\url{romano@stanford.edu}
\and
Azeem M.\ Shaikh \\
Department of Economics \\
University of Chicago \\
\url{amshaikh@uchicago.edu}
\and
Max Tabord-Meehan \\
Department of Economics \\
University of Chicago \\
\url{maxtm@uchicago.edu}
}

\begin{spacing}{1.2}
\maketitle
\end{spacing}

\vspace{-0.2in}

\begin{spacing}{1}
\begin{abstract}
This paper considers the problem of design-based inference for the average treatment effect in finely stratified experiments.  Here, by ``design-based'' we mean that the only source of uncertainty stems from the randomness in treatment assignment itself; by ``finely stratified'' we mean that units are first stratified into groups of a fixed size according to baseline covariates and then, within each group, a fixed number of units are assigned uniformly at random to treatment and the remainder to control. In this setting we present a novel estimator of the variance of the difference-in-means based on pairing ``adjacent" strata. Importantly, our estimator is well defined even in the challenging setting where there is exactly one treated or control unit per stratum. We prove that our estimator is upward-biased, and thus can be used for inference under mild restrictions on the finite population. We compare our estimator with some well-known estimators that have been proposed previously in this setting, and demonstrate that, while these estimators are also upward-biased, our estimator has smaller bias and therefore leads to more precise inferences whenever ``adjacent" strata are sufficiently similar. To further understand when our estimator leads to more precise inferences, we then introduce a framework motivated by a thought experiment in which the finite population can be modeled as having been drawn once in an i.i.d.\ fashion from a well-behaved probability distribution. In this framework, we argue that our estimator dominates the others in terms of limiting bias and that these improvements are strict except under exceptionally strong restrictions on the treatment effects.  Finally, we illustrate the practical relevance of our theoretical results through a simulation study, which reveals that our estimator can in fact lead to substantially more precise inferences, especially when the quality of stratification is high.
\end{abstract}
\end{spacing}

\noindent KEYWORDS: Experiments, Finite Population, Average Treatment Effect, Matched Pairs, Stratification

\noindent JEL classification codes: C12, C31, C35, C36

\thispagestyle{empty} 
\newpage
\setcounter{page}{1}

\section{Introduction} \label{sec:intro}

This paper considers the problem of design-based inference on the average treatment effect in finely stratified experiments.  Here, by ``design-based'' we mean that there is no sampling uncertainty and the only randomness stems from the variation in treatment assignment itself; by ``finely stratified'' we mean that treatment is assigned in the following fashion: units are first stratified into groups of a fixed size $k$ according to baseline covariates and then, within each group, a fixed number $\ell < k$ are assigned uniformly at random to treatment and the remainder to control.   A prominent special case of this framework is a matched pairs design, in which $k = 2$ and $\ell = 1$. While our primary focus is on experiments where treatment is assigned at the level of the individual, in Remark \ref{rem:cluster} we explain how our analysis can accommodate experiments where the treatment is assigned at the cluster level.

In this setting, we first establish, by way of motivation, a result that shows under mild conditions that inference using the usual difference-in-means estimator requires an estimator of its variance that is at least asymptotically upward-biased. After having motivated the importance of such estimators, we review the standard argument for why constructing such estimators is particularly challenging for the case in which $\ell = 1$ or $k - \ell = 1$; in particular, in this case the sample variance of the outcomes under treatment (if $\ell = 1$) or control (if $k - \ell = 1$) computed in each stratum are identically zero. With this in mind, we propose a novel estimator that remains well-defined even in this challenging case, and demonstrate that our estimator is upward-biased under minimal assumptions. The construction of our variance estimator involves a pairing of the strata, and its bias depends on the corresponding differences of the average treatment effects in these paired strata. As a result, if the strata are paired in such a way that ensures ``similar'' strata are adjacent, then the bias of our estimator will naturally be small.

We then compare our variance estimator with a canonical variance estimator that has been proposed previously for the case in which $\ell = 1$ or $k - \ell = 1$. The estimator, proposed by \cite{imai2008variance}, is based on the sample variance obtained by viewing the difference-in-means estimates for each stratum as independent observations.  Existing results demonstrate that this estimator is also upward-biased for the variance of the difference-in-means estimator, but we demonstrate that the bias of our estimator is lower (and as a consequence leads to more precise inferences) whenever we can ensure that the ``adjacent" strata used to construct our estimator are sufficiently similar. To further distinguish our estimator from the estimator proposed in \cite{imai2008variance}, as well as a second estimator proposed by \cite{fogarty2018mitigating} which modifies the Imai estimator by partialling out the average covariate values in each stratum, we introduce a framework motivated by a thought experiment in which the finite population can be modeled as being drawn once in an i.i.d.\ fashion from a well-behaved probability distribution. 
In this framework, we show that the limiting bias of our estimator is strictly smaller than that of \cite{imai2008variance} unless the treatment effects are homogeneous, and strictly smaller than that of \cite{fogarty2018mitigating} unless the conditional average treatment effect is linear in the covariates. As a result, confidence intervals based on our variance estimator will be strictly shorter than those based on the other variance estimators unless these exceptional restrictions hold.

The literature on stratified block randomization dates back to at least \cite{fisher1935design}. For the case in which $\ell = 1$ or $k - \ell = 1$, the variance estimator in \cite{imai2008variance} has been extended to different settings by \cite{imai2009essential}, \cite{pashley2021insights}, and \cite{zhu2024design-based}. \cite{fogarty2018regression-assisted} study regression adjustment for matched pair designs, and \cite{liu2020regression-adjusted} studies regression adjustment for stratified designs primarily for the case where $\min\{\ell, k - \ell\} > 1$. \cite{ding2017paradox} considers randomization inference for matched pairs, alongside other randomization schemes.  We note that all of these papers employ a design-based framework as we do in this paper. As a consequence, their analyses differ from work which studies inference in finely stratified experiments from a superpopulation perspective \citep[][]{bai2022optimality, bai2022inference, bai2024inference-b, bai2024covariate, bai2024inference, jiang2024bootstrap, cytrynbaum2024covariate, cytrynbaum2021optimal, bai2025efficiency}; see also \cite{abadie2008estimation}, who consider superpopulation estimation of the variance in matched pair designs, conditional on the covariates, in an alternative sampling framework where pairs are sampled instead of the units.

The remainder of the paper is organized as follows. In Section \ref{sec:setup}, we describe our setup and notation.  Section \ref{sec:main} contains our main results. There we first highlight the importance of upward-biased variance estimators, and then review existing proposals before introducing our novel variance estimator. In Section \ref{sec:variances}, we compare our variance estimators with existing ones in a particular limiting thought experiment. We examine the finite-sample behavior of confidence intervals based on all these estimators through a simulation study in Section \ref{sec:sims}.

\section{Setup and Notation} \label{sec:setup}

Consider an experiment consisting of $i \in \{1, \ldots, n\}$ units. For the $i$th unit, let $Y_{i} \in \mathbf{R}$ denote their observed outcome, $D_i \in \{0, 1 \}$ denote their received treatment, $X_i \in \mathbf{R}^{p}$ denote their observed, baseline covariates, and $Y_i(d)$ denote their potential outcome under treatment $d \in \{0,1\}$. As usual, the observed outcomes are related to the potential outcomes via the relationship
\begin{equation}\label{eq:obsY}
Y_i = Y_i(1)D_i + Y_i(0)(1 - D_i)~.
\end{equation}
In what follows, it will be convenient to use the following shorthand notation: for a generic random vector $A_i$ indexed by $i$, let $A^{(n)} := (A_i: 1 \leq i \leq n)$.  For later use, we also define $W_i := (Y_i(1),Y_i(0),X_i)'$. 

In the design-based framework that we maintain throughout the paper, the potential outcomes and covariates of the units in the experiment are modeled as nonrandom quantities, with the only source of randomness arising from the treatment assignment mechanism. Our analysis concerns ``finely stratified'' designs in which the covariates are used to stratify units in the experiment into groups, i.e., strata, of \emph{fixed} size $k$, and then, within each group, $\ell < k$ units are chosen uniformly at random to assign to treatment and the remainder to control; of particular interest to us will be the special case when $\min\{\ell, k - \ell\} = 1$. For ease of exposition, we assume throughout that $n = mk$, so that $m$ denotes the number of strata. Formally, we model the strata as a partition of $\{1, \dots, n\}$:
$$\Lambda_n := \{ \lambda_j \subset \{1, \dots, n\}: 1 \leq j \leq m \}~,$$ 
with $|\lambda_j| = k$. It is worth emphasizing that we have suppressed in the notation the fact that each $\lambda_j$ (and therefore $\Lambda_n$ itself) can depend on $X^{(n)}$. Since we maintain that $k$ is fixed throughout the paper, when we write that $n \to \infty$, it should be understood that $m \to \infty$. Using this notation, the (joint) distribution of treatment assignment is characterized by the following assumption:

\begin{assumption} \label{ass:dn}
Treatment status $D^{(n)}$ is assigned independently for each $1 \leq j \leq m$ so that $$(D_i : i \in \lambda_j) \sim \text{Unif}\bigg (\bigg \{(d_1, \ldots, d_k) \in \{0,1\}^k : \sum_{1 \leq i \leq k} d_i = \ell \bigg \} \bigg)~.$$
\end{assumption}

\noindent In other words, $\ell$ out of $k$ units in each stratum are treated uniformly at random, independently across strata.

Our parameter of interest is the finite population average treatment effect, given by $$\Delta_n := \bar Y_n(1) - \bar Y_n(0)~,$$ where, for $d \in \{0,1\}$, 
$$\bar Y_n(d) := \frac{1}{n} \sum_{1 \leq i \leq n} Y_i(d)~.$$  
A natural estimator of $\Delta_n$ is given by the usual difference-in-means estimator, i.e., 
$$\hat \Delta_n := \frac{1}{n(1)} \sum_{1 \leq i \leq n} Y_i D_i - \frac{1}{n(0)} \sum_{1 \leq i \leq n} Y_i (1 - D_i)~,$$ 
where $n(1) = \sum_{1 \le i \le n} D_i = n \eta$, with $\eta := \ell/k$ the proportion of the $n$ units that are treated, and $n(0) = \sum_{1 \le i \le n}(1 - D_i) = n(1 - \eta)$.  Note that because we have assumed that $n = mk$, both $n \eta = m \ell$ and $n ( 1 - \eta) = m (k - \ell)$ are integers.

\begin{remark}\label{rem:cluster}
The analysis in this paper can be readily extended to cluster RCTs where the treatment is implemented at the cluster level \citep[see, for instance,][]{imai2009essential, su2021model, de2024level, bai2024inferencecluster}. Let $i \in \{1, \dots, n\}$ denote the $i$th cluster and $g_i$ denote the number of units in the $i$th cluster. Let $D_i$ denote the treatment assignment for the $i$th cluster, which we emphasize applies to all units in this cluster. For each $i$ and $1 \leq t \leq g_i$, let $Y_{i, t}(1)$, $Y_{i, t}(0)$ denote the nonrandom potential outcomes for the $t$th unit in the $i$th cluster and let $Y_{i, t}$ denote its observed outcome. Suppose the parameter of interest is the average treatment effect across all units:
\[ \Delta_n^\dagger = \frac{1}{\sum_{1 \leq i \leq n} g_i} \sum_{1 \leq i \leq n} \sum_{1 \leq t \leq g_i} (Y_{i,t}(1) - Y_{i,t}(0))~. \]
Note that $\Delta_n^\dagger = \frac{1}{n}\sum_{1 \leq i \leq n} (Y_i^\dagger(1) -  Y_i^\dagger(0))$, where for $d \in \{0, 1\}$,
\[ Y_i^\dagger(d) = \frac{1}{\frac{1}{n}\sum_{1 \leq i \leq n} g_i} \sum_{1 \leq t \leq g_i} Y_{i,t}(d)~. \]
Accordingly, let $Y_i^\dagger = Y_i^\dagger(1) D_i + Y_i^\dagger(0) (1 - D_i)$, and a natural estimator for $\Delta^{\dagger}_n$ is given by
\[ \hat \Delta_n^\dagger = \frac{1}{n(1)} \sum_{1 \leq i \leq n} Y_i^\dagger D_i - \frac{1}{n(0)}\sum_{1 \le i \le n} Y_i^\dagger (1 - D_i)~, \]
where $n(1)$ and $n(0)$ are defined as before, and in this case are the numbers of treated and untreated \emph{clusters}. The analysis in this paper then immediately applies by replacing $Y_i$ by $Y^{\dagger}_i$ throughout.
\end{remark}

\section{Main Results}\label{sec:main}
In this section, we begin, by way of motivation, with a result that highlights the importance of variance estimators that are at least \emph{asymptotically} upward-biased, defined precisely in the statement of  Theorem \ref{thm:upward} below.  Under mild regularity conditions, we show, in particular, that valid inference for $\Delta_n$ based on a standard $t$-statistic requires a variance estimator that is asymptotically upward-biased for $\var[\hat{\Delta}_n]$.  Of course, a simple, sufficient condition for an estimator to be asymptotically upward-biased is that it is upward-biased.  In Section \ref{sec:review}, we review some prior proposals for constructing variance estimators in finely stratified experiments that are upward-biased under only the assumption that $D^{(n)}$ satisfies Assumption \ref{ass:dn}.  We further discuss in Remark \ref{rem:better_bounds} some improvements upon these estimators that typically result in estimators that are only asymptotically upward-biased under assumptions stronger than Assumption \ref{ass:dn} and therefore result in valid inference less generally than their upward-biased counterparts.  Finally, in Section \ref{sec:main_var}, we present our main results.  We introduce a novel estimator of $\var[\hat{\Delta}_n]$ for finely stratified experiments and show that, like the estimators that we review in Section \ref{sec:review}, it is upward-biased for $\var[\hat{\Delta}_n]$ under only the assumption that $D^{(n)}$ satisfies Assumption \ref{ass:dn}. In contrast to these other estimators, however, we further argue that the magnitude of the bias of our estimator depends explicitly on the quality of the groupings of the experimental units into strata, in a sense made precise by Theorem \ref{thm:main} below. 

\subsection{Motivating Result}\label{sec:motivation}
The following theorem formalizes the sense in which an asymptotically upward-biased estimator for $\var[\hat{\Delta}_n]$ is required for valid inference on $\Delta_n$. 

\begin{theorem}\label{thm:upward}
Suppose $D^{(n)}$ satisfies Assumption \ref{ass:dn} and consider a sequence of finite populations such that 
\begin{equation}\label{eq:pop_moments}
\frac{1}{n}\max_{d \in \{0, 1\}}\max_{1 \leq i \leq n}Y_i(d)^2 \rightarrow 0
\end{equation}
as $n \rightarrow \infty$, and 
\begin{equation}\label{eq:pop_nondegen}
0 < \liminf_{n \to \infty} n \cdot \var[\hat{\Delta}_n] \le \limsup_{n \to \infty} n \cdot \var[\hat \Delta_n] < \infty~.
\end{equation}
Further assume that $\tilde{V}_n$ is an estimator of $\var[\hat \Delta_n]$ such that
\begin{equation} \label{eq:consistent}
n\cdot \big | \tilde{V}_n - E[\tilde{V}_n] \big | \xrightarrow{P} 0
\end{equation}
as $n \rightarrow \infty$. Then,
\begin{equation} \label{eq:valid}
\liminf_{n \rightarrow \infty} P\Big\{\hat{\Delta}_n - \sqrt{\tilde{V}_n}\cdot z_{1-\alpha/2} \le \Delta_n \leq \hat{\Delta}_n + \sqrt{\tilde{V}_n}\cdot z_{1 - \alpha/2}\Big\} \ge 1 - \alpha~,    
\end{equation}
where $z_{1-\alpha/2}$ is the $1 - \alpha/2$ quantile of the standard normal distribution, if and only if
\begin{equation} \label{eq:upward}
\liminf_{n \to \infty} n \big ( E[\tilde{V}_n] - \var[\hat{\Delta}_n] \big ) \geq 0~.
\end{equation}
\end{theorem}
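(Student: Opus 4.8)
The plan is to reduce both directions to a single central limit theorem for the studentized difference-in-means, after which everything becomes Slutsky-type bookkeeping. Write $\sigma_n^2 := \var[\hat{\Delta}_n]$ and $V_n := E[\tilde V_n]$. The cornerstone is the claim that
\[
\frac{\hat{\Delta}_n - \Delta_n}{\sigma_n} \xrightarrow{d} N(0,1)~.
\]
To establish this, I would decompose $\hat{\Delta}_n = \frac{1}{m}\sum_{1 \le j \le m}\hat{\Delta}_{n,j}$, where $\hat{\Delta}_{n,j} := \frac{1}{\ell}\sum_{i \in \lambda_j}Y_i D_i - \frac{1}{k-\ell}\sum_{i \in \lambda_j}Y_i(1-D_i)$ is the within-stratum difference-in-means. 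By Assumption \ref{ass:dn} these summands are independent across $j$, each is unbiased for the within-stratum average treatment effect, and, because $k$ is fixed, each is bounded in absolute value by $C_k \max_{d \in \{0,1\}}\max_{1 \le i \le n}|Y_i(d)|$ for a constant $C_k$ depending only on $k$ and $\ell$. I would then verify Lindeberg's condition for this triangular array: since $\sigma_n \asymp n^{-1/2}$ by the lower bound in \eqref{eq:pop_nondegen}, while \eqref{eq:pop_moments} forces $\max_{d,i}|Y_i(d)| = o(n^{1/2})$, each centered summand divided by $m$ is of order $o(\sigma_n)$ uniformly in $j$, so the Lindeberg truncation is eventually vacuous and the condition holds trivially.

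Next I would convert the stability condition \eqref{eq:consistent} into a statement about the ratio $R_n := \tilde V_n / \sigma_n^2$. Writing $\rho_n := V_n / \sigma_n^2$,
\[
R_n - \rho_n = \frac{\tilde V_n - V_n}{\sigma_n^2} = \frac{n(\tilde V_n - V_n)}{n\sigma_n^2}~,
\]
whose numerator tends to $0$ in probability by \eqref{eq:consistent} and whose denominator is bounded away from $0$ and $\infty$ by \eqref{eq:pop_nondegen}; hence $R_n - \rho_n \xrightarrow{P} 0$. The same boundedness lets me translate the bias condition \eqref{eq:upward} into $\liminf_n \rho_n \ge 1$, since $\rho_n - 1 = n(V_n - \sigma_n^2)/(n\sigma_n^2)$ has denominator in a fixed compact subinterval of $(0,\infty)$, so a nonnegative liminf of the numerator passes to a nonnegative liminf of the ratio.

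For the ``if'' direction I would assume \eqref{eq:upward}, so $\liminf_n \rho_n \ge 1$. Fixing $\delta \in (0,1/2)$, on the event $\{|R_n - \rho_n| \le \delta\}$ one has $R_n \ge 1 - 2\delta$ eventually, whence
\[
P\Big\{|\hat{\Delta}_n - \Delta_n| \le \sqrt{\tilde V_n}\,z_{1-\alpha/2}\Big\} \ge P\Big\{\tfrac{|\hat{\Delta}_n - \Delta_n|}{\sigma_n} \le \sqrt{1-2\delta}\,z_{1-\alpha/2}\Big\} - P\{|R_n - \rho_n| > \delta\}~.
\]
Taking $\liminf$ and using the central limit theorem together with $R_n - \rho_n \xrightarrow{P} 0$ gives a lower bound of $2\Phi(\sqrt{1-2\delta}\,z_{1-\alpha/2}) - 1$, and letting $\delta \downarrow 0$ yields $1 - \alpha$. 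For the ``only if'' direction I would argue by contraposition: if \eqref{eq:upward} fails then, noting that $V_n \ge 0$ keeps $n(V_n - \sigma_n^2)$ bounded below, there is a subsequence along which $n(V_n - \sigma_n^2)$ converges to a negative limit and $n\sigma_n^2$ converges in $(0,\infty)$, so $\rho_n \to \rho_\ast \in [0,1)$ and hence $R_n \xrightarrow{P} \rho_\ast$. Along this subsequence Slutsky's theorem and the central limit theorem give
\[
P\Big\{|\hat{\Delta}_n - \Delta_n| \le \sqrt{\tilde V_n}\,z_{1-\alpha/2}\Big\} \to 2\Phi\big(\sqrt{\rho_\ast}\,z_{1-\alpha/2}\big) - 1 < 1 - \alpha~,
\]
so \eqref{eq:valid} cannot hold; the continuous-mapping step is legitimate because the limiting variable $|Z| - \sqrt{\rho_\ast}\,z_{1-\alpha/2}$ has no atom at $0$.

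The main obstacle is the central limit theorem in the first step: everything downstream is Slutsky's theorem and elementary manipulation of the ratios $R_n$ and $\rho_n$, but the asymptotic normality of the studentized difference-in-means in a design-based, finite-population setting under only the negligibility condition \eqref{eq:pop_moments} is what does the real work. The saving grace is that $k$ is held fixed, which makes the stratum-level summands uniformly bounded and collapses Lindeberg's condition to the order computation above; with growing stratum sizes one would instead need a genuine Lindeberg--Feller argument.
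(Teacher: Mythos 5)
Your proposal is correct and follows essentially the same route as the paper: the paper likewise establishes a design-based CLT (Theorem \ref{thm:normal}) by verifying that Lindeberg's condition becomes vacuous under \eqref{eq:pop_moments}--\eqref{eq:pop_nondegen} (bounded stratum-level summands with fixed $k$), and then handles both directions of the equivalence through the same ratio/subsequence arguments, packaged as Lemma \ref{lem:ratio}, with sufficiency proved directly and necessity by contraposition along a subsequence where the ratio of variances converges to a limit strictly below one. The only cosmetic differences are that the paper decomposes $\hat{\Delta}_n - \Delta_n$ via the transformed outcomes $R_i = Y_i(1)/\eta + Y_i(0)/(1-\eta)$ and finite-population sampling formulas rather than via the stratum-level differences-in-means directly, and that it isolates the Slutsky-type bookkeeping in a standalone lemma.
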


Note that because $\tilde{V}_n$ is defined as an estimator of $\var[\hat \Delta_n]$ and not the asymptotic variance of $\sqrt{n}(\hat \Delta_n - \Delta_n)$, the conditions in the theorem are all stated with a scaling by $n$. When \eqref{eq:upward} holds, we say that $\tilde{V}_n$ is asymptotically upward-biased for $\var[\hat{\Delta}_n]$.  A simple sufficient condition for \eqref{eq:upward} is, of course, that 
\begin{equation} \label{eq:exactupward}
E[\tilde{V}_n] \ge \var[\hat \Delta_n] \text{ for all } n \geq 1~.
\end{equation}
When \eqref{eq:exactupward} holds, we say that $\tilde{V}_n$ is upward-biased for $\var[\hat{\Delta}_n]$. Theorem \ref{thm:upward} thus demonstrates that, under mild regularity conditions, confidence intervals for $\Delta_n$ of the form $[\hat \Delta_n \pm  \sqrt{\tilde{V}_n}\cdot z_{1 - \alpha/2}]$ are valid in the sense of having the right limiting coverage probability,  if and only if $\tilde{V}_n$ is asymptotically upward-biased for $\var[\hat{\Delta}_n]$. These assumptions include weak restrictions on the finite population (i.e., \eqref{eq:pop_moments}), a non-degeneracy condition (i.e., \eqref{eq:pop_nondegen}), and a requirement that $n\cdot\tilde{V}_n$ is consistent for its expectation (i.e., \eqref{eq:consistent}). This last condition will typically hold under some further restrictions on finite population moments; see, e.g., Theorem \ref{thm:main_var} below. Finally, we note that it is immediate from this result that the length of the interval $[\hat \Delta_n \pm  \sqrt{\tilde{V}_n}\cdot z_{1 - \alpha/2}]$ will necessarily be (asymptotically, weakly) shorter whenever it is constructed using a variance estimator $\tilde{V}_n$ with correspondingly smaller asymptotic upwards bias. As a consequence, variance estimators with smaller bias lead to (asymptotically) more precise inferences. 

The estimators discussed in Sections \ref{sec:review} and \ref{sec:main_var} below can be shown to satisfy \eqref{eq:exactupward}, and therefore \eqref{eq:upward}, whenever $D^{(n)}$ satisfies Assumption \ref{ass:dn}.  Such estimators therefore lead to confidence intervals for $\Delta_n$ that are valid whenever the hypotheses of Theorem \ref{thm:upward} are satisfied.  It is worth emphasizing, however, that establishing \eqref{eq:upward} for other estimators may in general require additional assumptions beyond the hypotheses of Theorem \ref{thm:upward}.  In Remarks \ref{rem:better_bounds} and \ref{rem:collapse} we discuss some examples of such estimators, and we illustrate using our simulations in Section \ref{sec:sims} that their resulting confidence intervals may lead to under-coverage when these additional conditions are not satisfied. 

\subsection{Review of Some Upwards Biased Estimators of $\var[\hat{\Delta}_n]$}\label{sec:review}

In this section, we review some upward-biased estimators of $\var[\hat{\Delta}_n]$.  To this end, recall that it can be shown using standard arguments \citep[see, for instance,][]{imbens2015causal} that
\begin{equation} \label{eq:varDeltahat}
\var[\hat{\Delta}_n] = \frac{1}{nm}\sum_{1 \leq j \leq m}\left(\frac{S^2_j(1)}{\eta} + \frac{S^2_j(0)}{1 - \eta} - S_{j, \Delta}^2\right)~,    
\end{equation}
where
\[S^2_j(d) := \frac{1}{k-1}\sum_{i \in \lambda_j}\left(Y_i(d)- \bar Y_{j, n}(d)\right)^2~, \hspace{5mm} S_{j, \Delta}^2 := \frac{1}{k-1}\sum_{i \in \lambda_j}\left(Y_i(1) - Y_i(0) - \Delta_{j,n}\right)^2~,\]
with
\[\bar Y_{j, n}(d) := \frac{1}{k}\sum_{i \in \lambda_j}Y_i(d)~, \hspace{5mm} \Delta_{j,n} := \frac{1}{k}\sum_{i \in \lambda_j}\left(Y_i(1) - Y_i(0)\right)~.\]
In settings where $\min\{\ell, k - \ell\} > 1$, the construction of an upward-biased variance estimator is straightforward: an unbiased estimator of $S^2_j(d)$ for $d \in \{0, 1\}$ is given by the sample variance of the outcomes for units within stratum $j$ assigned to treatment $d$, which we denote by $\hat{S}^2_j(d)$ \citep[see][]{imbens2015causal, pashley2021insights}. A simple estimator of $\var[\hat{\Delta}_n]$ is thus given by
\begin{equation}\label{eq:var_coarse}\frac{1}{nm}\sum_{1 \le j \le m}\left(\frac{\hat{S}^2_j(1)}{\eta} + \frac{\hat{S}^2_j(0)}{1 - \eta}\right)~,
\end{equation}
and its bias is $\frac{1}{nm}\sum_{1 \le j \le m}S_{j, \Delta}^2 \ge 0$. Note that this estimation strategy exploits the lower bound $S_{j, \Delta}^2 \ge 0$; we return to this observation in Remark \ref{rem:better_bounds} below.

This estimation strategy fails, however, when there is exactly one treated or control unit in a stratum, i.e., when $\min\{\ell, k - \ell\} = 1$, since in this case the corresponding sample variance is identically zero.  In these settings, a canonical estimator considered in the literature instead computes the variance of the stratum-level average treatment effect estimates \citep[][]{imai2008variance}:
\[\hat{V}_n^{\rm IM} := \frac{1}{m (m - 1)}\sum_{1 \leq j \leq m}\left(\hat{\Delta}_{j,n} - \hat{\Delta}_n\right)^2~,\]
where
\[\hat{\Delta}_{j,n} := \frac{1}{\ell}\sum_{i \in \lambda_j}Y_iD_i - \frac{1}{k - \ell}\sum_{i \in \lambda_j}Y_i(1-D_i) \]
is the difference in means in the $j$th stratum. This variance estimator serves as the basic scaffolding for many of the recent estimators proposed in the literature on inference in stratified randomized experiments: it is a special case of the small-block variance estimator proposed in \cite{pashley2021insights} when all strata are the same size (see also \cite{zhu2024design-based}), a special case of the pair-cluster variance estimator of \cite{de2024level} in the setting of an individual-level randomized experiment, and a special case of the regression-based estimator due to \cite{fogarty2018mitigating}, which we introduce in Section \ref{sec:variances}. Note that the bias of $\hat{V}^{\rm IM}_n$ is given by \citep[see][]{imbens2015causal,fogarty2018mitigating}
\begin{equation} \label{eq:imai-bias}
E[\hat{V}^{\rm IM}_n] - \var[\hat{\Delta}_n] = \frac{1}{m(m-1)}\sum_{1 \leq j \leq m}(\Delta_{j,n} - \Delta_n)^2 \geq 0~,    
\end{equation}
so that $\hat{V}^{\rm IM}_n$ is an upward-biased estimator of $\var[\hat{\Delta}_n]$. Moreover, under mild assumptions, it is straightforward to establish using Chebyshev's inequality that $\hat V_n^{\rm IM}$ is consistent for its expectation in the sense of \eqref{eq:consistent}, and thus can be used to construct valid confidence intervals in the sense of \eqref{eq:valid}.

\begin{remark}\label{rem:better_bounds}
In some circumstances, it may be further possible to bound $S_{j, \Delta}^2$ from below by a positive number instead of zero, and this tighter bound could then be used to construct less conservative estimators of $\var[\hat \Delta_n]$. For instance, it follows from the Cauchy-Schwarz inequality that
\[ \sum_{i \in \lambda_j} (Y_i(1) - \bar Y_{j, n}(1)) (Y_i(0) - \bar Y_{j, n}(0)) \leq \bigg ( \sum_{i \in \lambda_j} (Y_i(1) - \bar Y_{j, n}(1))^2 \bigg )^{1/2} \bigg ( \sum_{i \in \lambda_j} (Y_i(0) - \bar Y_{j, n}(0))^2 \bigg )^{1/2}~,\]
from which we can deduce the lower bound 
\[S_{j, \Delta}^2 \ge (S_j(1) - S_j(0))^2 \ge 0~.\]
In fact, it is sometimes possible to achieve even tighter bounds; see \cite{aronow2014sharp} for details. We note, however, that we are not aware of estimators based on lower bounds other than $S_{j, \Delta}^2 \ge 0$ which are necessarily guaranteed to be upward-biased, and in general additional assumptions  need to be imposed in order to guarantee that the resulting estimators are at least asymptotically upward-biased; see, for instance, Corollary 1 in \cite{aronow2014sharp}. As a consequence, these improved estimators may not lead to valid confidence intervals whenever these additional assumptions fail to hold. 
\end{remark}

\subsection{A Novel Paired-Strata Estimator of $\var[\hat{\Delta}_n]$}\label{sec:main_var}
Building on our earlier work on super-population approaches to inference for finely stratified designs \citep[][]{bai2022inference,bai2024inference-b,  bai2024covariate,bai2024inference,bai2025efficiency}, we now propose a novel estimator of $\var[\hat{\Delta}_n]$, primarily for settings where $\min\{\ell, k - \ell\} = 1$. Our estimator is constructed by pairing together strata; for simplicity, we pair adjacent strata together, so the pairs are given by $\{(\lambda_{2j - 1}, \lambda_{2j}): 1 \le j \le \lfloor m/2 \rfloor\}$, but we emphasize that this pairing should be understood as being obtained after permuting the indices of the strata $\{\lambda_j: 1 \leq j \leq m\}$.  In practice, it will be desirable to do this in a way that ensures ``similar'' strata are adjacent. With this is mind, our estimator is defined as follows:
\begin{align}\label{def::hat_V_n_mp}
\hat{V}_n := \frac{1}{m}\left(\hat{\tau}_n^2-\hat{\kappa}_n\right),
\end{align}
where
\[ \hat \tau_n^2 = \frac{1}{m} \sum_{1 \leq j \leq m} \hat \Delta_{j, n}^2\]
and
\[ \hat \kappa_n = \frac{2}{m} \sum_{1 \leq j \leq \lfloor \frac{m}{2} \rfloor} \hat \Delta_{2j - 1, n} \hat \Delta_{2j, n}~. \]
Note that it follows from straightforward algebraic manipulation that $\hat{V}_n \ge 0$. The construction of $\hat V_n$ can be motivated using a specific limiting thought experiment that we introduce in Section \ref{sec:variances} below.  The following theorem shows that $\hat V_n$ is an upward-biased estimator of $\var[\hat{\Delta}_n]$ when $D^{(n)}$ satisfies Assumption \ref{ass:dn} and, under mild additional restrictions, is consistent for its expectation.
\begin{theorem} \label{thm:main}
Suppose $D^{(n)}$ satisfies Assumption \ref{ass:dn}. Then,
\begin{enumerate}[\rm (a)]
\item The bias of $\hat{V}_n$ is given by
\[E[\hat{V}_n] - \var[\hat{\Delta}_n] = \xi^2_n := \frac{1}{m^2} \sum_{1 \leq j \leq \lfloor \frac{m}{2} \rfloor} (\Delta_{2j - 1, n} - \Delta_{2j, n})^2 \ge 0~. \]
\item Suppose in addition that 
\[\frac{1}{n}\sum_{1 \leq i \leq n}Y_i(d)^4 = o(n)~,\]
holds for $d \in \{0, 1\}$. Then, \eqref{eq:consistent} holds as $n \rightarrow \infty$. 
\end{enumerate}
Therefore, by combining Theorems \ref{thm:upward} and \ref{thm:main}, we immediately obtain the asymptotic validity of confidence intervals constructed using $\hat{V}_n$. 
\end{theorem}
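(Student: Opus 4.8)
The plan is to exploit the fact that both the estimator and its target decompose into contributions that are independent across strata. The single most useful observation is that the global difference-in-means is the simple average of the stratum-level estimates, $\hat\Delta_n = \frac{1}{m}\sum_{1\le j\le m}\hat\Delta_{j,n}$, and that under Assumption \ref{ass:dn} the $\hat\Delta_{j,n}$ are mutually independent, since each depends only on $(D_i : i\in\lambda_j)$ while the potential outcomes are nonrandom. A one-line first-moment calculation using $E[D_i]=\eta$ gives $E[\hat\Delta_{j,n}] = \Delta_{j,n}$, and independence then yields $\var[\hat\Delta_n] = \frac{1}{m^2}\sum_{1\le j\le m}\var[\hat\Delta_{j,n}]$ (this is also consistent with \eqref{eq:varDeltahat} combined with the single-stratum Neyman formula). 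These facts are the backbone of both parts, and notably part (a) never requires the explicit form of the within-stratum variance.

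For part (a), I would first compute $E[\hat\tau_n^2] = \frac{1}{m}\sum_j E[\hat\Delta_{j,n}^2] = \frac{1}{m}\sum_j\big(\var[\hat\Delta_{j,n}] + \Delta_{j,n}^2\big)$. Because strata $2j-1$ and $2j$ are distinct, independence factorizes each cross term, $E[\hat\Delta_{2j-1,n}\hat\Delta_{2j,n}] = \Delta_{2j-1,n}\Delta_{2j,n}$, so $E[\hat\kappa_n] = \frac{2}{m}\sum_{j\le\lfloor m/2\rfloor}\Delta_{2j-1,n}\Delta_{2j,n}$. Forming $E[\hat V_n] = \frac{1}{m}\big(E[\hat\tau_n^2] - E[\hat\kappa_n]\big)$ and subtracting $\var[\hat\Delta_n]$ cancels every variance term and leaves $\frac{1}{m^2}\big(\sum_j\Delta_{j,n}^2 - 2\sum_{j\le\lfloor m/2\rfloor}\Delta_{2j-1,n}\Delta_{2j,n}\big)$; pairing the quadratic terms and completing the square identifies this with $\xi_n^2 = \frac{1}{m^2}\sum_{j\le\lfloor m/2\rfloor}(\Delta_{2j-1,n}-\Delta_{2j,n})^2 \ge 0$. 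I would carry out the detail for even $m$, where the pairing exhausts all strata; the odd case leaves one unpaired stratum and is handled with only a notational adjustment.

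For part (b), since $n = mk$ with $k$ fixed, $n(\hat V_n - E[\hat V_n]) = k\big(\hat\tau_n^2 - \hat\kappa_n - E[\hat\tau_n^2 - \hat\kappa_n]\big)$, so by Chebyshev's inequality it suffices to show $\var[\hat\tau_n^2 - \hat\kappa_n] \to 0$, and via $\var[A-B]\le 2\var[A]+2\var[B]$ it is enough to control $\var[\hat\tau_n^2]$ and $\var[\hat\kappa_n]$ separately. Both are sums of independent terms: $\hat\tau_n^2$ is $\frac{1}{m}$ times a sum of the independent $\hat\Delta_{j,n}^2$, while the products in $\hat\kappa_n$ depend on disjoint pairs of strata and are therefore independent across $j$. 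Hence $\var[\hat\tau_n^2] = \frac{1}{m^2}\sum_j\var[\hat\Delta_{j,n}^2] \le \frac{1}{m^2}\sum_j E[\hat\Delta_{j,n}^4]$, and an analogous bound with an extra constant holds for $\var[\hat\kappa_n]$ after factorizing $E[(\hat\Delta_{2j-1,n}\hat\Delta_{2j,n})^2]$ and applying AM-GM together with Jensen's inequality to reduce to fourth moments.

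The one genuine estimate is the fourth-moment bound. Since $|\hat\Delta_{j,n}|$ is at most the sum of the within-stratum treated and control outcome magnitudes, a crude bound gives $\hat\Delta_{j,n}^4 \le C_k\sum_{i\in\lambda_j}(Y_i(1)^4 + Y_i(0)^4)$ for a constant $C_k$ depending only on $k,\ell$. Summing over strata collapses the double sum to $\sum_{1\le i\le n}(Y_i(1)^4+Y_i(0)^4)$, so $\var[\hat\tau_n^2] \le \frac{C_k'}{m^2}\sum_{1\le i\le n}(Y_i(1)^4+Y_i(0)^4)$, and the hypothesis $\frac{1}{n}\sum_i Y_i(d)^4 = o(n)$, equivalently $\sum_i Y_i(d)^4 = o(n^2) = o(m^2)$, forces this to $o(1)$; the same holds for $\var[\hat\kappa_n]$. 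This then delivers \eqref{eq:consistent}. I expect this moment bookkeeping—keeping the fixed-$k$ constants in check and confirming that the $o(n)$ scaling is exactly what the $1/m^2$ prefactor requires—to be the only place needing care, since the structural independence across strata trivializes everything else.
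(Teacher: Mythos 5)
Your proposal is correct, and part (a) takes a genuinely different route from the paper. The paper proves (a) by brute force: it invokes Cochran's finite-population sampling formulas to compute $E[\hat\tau_n^2 - \hat\kappa_n]$ explicitly in terms of sums of products $Y_i(d)Y_{i'}(d')$ within strata, writes out $m\cdot\var[\hat\Delta_n]$ from the Neyman-type formula \eqref{eq:varDeltahat}, and then grinds through roughly two pages of algebra until the difference collapses to a sum of squares. You instead observe that $\hat\Delta_n = \frac{1}{m}\sum_j \hat\Delta_{j,n}$ with the $\hat\Delta_{j,n}$ independent and unbiased for $\Delta_{j,n}$, so that $\var[\hat\Delta_n] = \frac{1}{m^2}\sum_j \var[\hat\Delta_{j,n}]$, $E[\hat\Delta_{j,n}^2] = \var[\hat\Delta_{j,n}] + \Delta_{j,n}^2$, and $E[\hat\Delta_{2j-1,n}\hat\Delta_{2j,n}] = \Delta_{2j-1,n}\Delta_{2j,n}$; the within-stratum variances then cancel identically and the bias completes to $\xi_n^2$ without the stratum-level variance ever being computed. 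This is shorter, more conceptual, and makes transparent \emph{why} the result holds. What the paper's heavier computation buys is reusability: the explicit expressions for $E[\hat\tau_n^2]$ and $E[\hat\kappa_n]$ derived there are recycled verbatim in the proof of Theorem \ref{thm:main_var}, which your cancellation argument does not produce. Your part (b) is essentially identical to the paper's (independence across strata/pairs, Chebyshev, reduction to fourth moments), differing only in that your fourth-moment bound $\hat\Delta_{j,n}^4 \lesssim_k \sum_{i\in\lambda_j}(Y_i(1)^4+Y_i(0)^4)$ is deterministic where the paper's uses Jensen plus $E[D_i]=\ell/k$. One caveat you share with the paper: for odd $m$ the unpaired stratum contributes an extra $\Delta_{m,n}^2/m^2$ to the bias, so the stated equality (as opposed to upward-biasedness) really requires $m$ even; calling this ``only a notational adjustment'' understates it slightly, but the paper's own proof also silently assumes $m$ even by summing to $m/2$.
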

Theorem \ref{thm:main} justifies the use of $\hat V_n$ for inference about $\Delta_n$ and further demonstrates that the magnitude of its bias depends on the differences of the stratum-level average treatment effects across pairs of strata, and hence the bias decreases whenever the pairs of strata are formed in a way that increases their homogeneity in this sense. The bias will therefore be small not only in the extreme instance in which the stratum-level average treatment effects are similar across all strata, but also when stratum-level average treatment effects vary across strata and are only similar for adjacent strata. Corollary \ref{cor:bias_sign} documents the relative biases of $\hat{V}_n$ versus $\hat{V}^{\rm IM}_n$. 

\begin{corollary}\label{cor:bias_sign}
Suppose $D^{(n)}$ satisfies Assumption \ref{ass:dn}. Then 
\[E[\hat{V}_n] - \var[\hat{\Delta}_n] \le E[\hat{V}^{\rm IM}_n] - \var[\hat{\Delta}_n]\]
if an only if 
\begin{equation}\label{eq:bias} \frac{1}{m} \sum_{1 \leq j \leq \lfloor \frac{m}{2} \rfloor} (\Delta_{2j - 1, n} - \Delta_n) (\Delta_{2j, n} - \Delta_n) \geq - \frac{1}{2m(m - 1)} \sum_{1 \leq j \leq m} (\Delta_{j, n} - \Delta_n)^2~. 
\end{equation}
\end{corollary}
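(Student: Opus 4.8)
The plan is to reduce the claimed equivalence to pure algebra, since both biases in question have already been computed: Theorem \ref{thm:main}(a) gives $E[\hat{V}_n] - \var[\hat{\Delta}_n] = \xi_n^2 = \frac{1}{m^2}\sum_{1 \le j \le \lfloor m/2\rfloor}(\Delta_{2j-1,n}-\Delta_{2j,n})^2$, and \eqref{eq:imai-bias} gives $E[\hat{V}^{\rm IM}_n] - \var[\hat{\Delta}_n] = \frac{1}{m(m-1)}\sum_{1 \le j \le m}(\Delta_{j,n}-\Delta_n)^2$. Because $\var[\hat{\Delta}_n]$ appears on both sides of the inequality $E[\hat{V}_n]-\var[\hat{\Delta}_n] \le E[\hat{V}^{\rm IM}_n]-\var[\hat{\Delta}_n]$, it cancels, so the entire statement is a deterministic comparison between these two sums. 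No further probabilistic input is needed.

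First I would center the stratum-level effects by setting $a_j := \Delta_{j,n}-\Delta_n$ and use the trivial identity $\Delta_{2j-1,n}-\Delta_{2j,n} = a_{2j-1}-a_{2j}$, so that each summand of $\xi_n^2$ expands as $a_{2j-1}^2 - 2a_{2j-1}a_{2j} + a_{2j}^2$. Summing over the $\lfloor m/2\rfloor$ pairs, the squared terms reassemble into $\sum_{1 \le j \le m} a_j^2$ (exactly so when $m$ is even), which is precisely $m(m-1)$ times the bias of $\hat{V}^{\rm IM}_n$, while the cross terms contribute $-2\sum_{1 \le j \le \lfloor m/2\rfloor} a_{2j-1}a_{2j}$. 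This yields the compact representation
\[ \xi_n^2 = \frac{1}{m^2}\Bigg(\sum_{1 \le j \le m} a_j^2 - 2\sum_{1 \le j \le \lfloor m/2\rfloor} a_{2j-1}a_{2j}\Bigg)~. \]

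With both biases expressed through $\sum_{1\le j\le m} a_j^2$ and the cross-product $\sum_{1\le j\le\lfloor m/2\rfloor} a_{2j-1}a_{2j}$, I would substitute into $\xi_n^2 \le \frac{1}{m(m-1)}\sum_{1\le j\le m}a_j^2$, multiply through by $m^2$, and collect the $\sum a_j^2$ terms using $\frac{m}{m-1}-1 = \frac{1}{m-1}$. This reduces the inequality to $-2\sum_{1\le j\le\lfloor m/2\rfloor} a_{2j-1}a_{2j} \le \frac{1}{m-1}\sum_{1\le j\le m}a_j^2$. Dividing by $-2m$, which reverses the inequality, produces exactly \eqref{eq:bias}; since every manipulation is reversible, this establishes the ``if and only if.''

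I do not anticipate a substantive obstacle here, as the argument is elementary once the two bias formulas are in hand; the only point demanding care is the bookkeeping for odd $m$. When $m$ is odd, stratum $\lambda_m$ is left unpaired, so the reassembly of squared terms yields $\sum_{1\le j\le m}a_j^2 - a_m^2$ rather than the full sum, and I would check whether the authors' pairing convention (or an implicit restriction to even $m$) eliminates this term; otherwise the unpaired stratum contributes an additive correction to the threshold in \eqref{eq:bias} that must be tracked through the final rearrangement.
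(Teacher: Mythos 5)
Your proof is correct and is exactly the argument the paper intends: the corollary is presented as an immediate algebraic consequence of Theorem \ref{thm:main}(a) and the bias formula \eqref{eq:imai-bias} (the paper gives no separate proof), and your centering $a_j = \Delta_{j,n}-\Delta_n$, expansion of $(a_{2j-1}-a_{2j})^2$, and reversible rearrangement is precisely the required computation. Your flagged concern about odd $m$ is legitimate but is inherited from the paper itself: the proof of Theorem \ref{thm:main}(a) writes its summation limits as $m/2$ rather than $\lfloor m/2\rfloor$, implicitly assuming $m$ even, and under that convention the squared terms reassemble into the full sum $\sum_{1\le j\le m}(\Delta_{j,n}-\Delta_n)^2$ so that the ``if and only if'' holds exactly as stated (for odd $m$ an unpaired term $(\Delta_{m,n}-\Delta_n)^2$ survives and condition \eqref{eq:bias} is then sufficient but not necessary).
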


We note that, a sufficient condition for \eqref{eq:bias} to hold is that the left-hand side is non-negative: we expect this to be the case whenever ``similar" strata are paired together. We argue that such a  condition holds quite generally for large $m$ in Section \ref{sec:variances} through additional restrictions on $W^{(n)}$ and $\Lambda_n$ that are motivated by a particular limiting thought experiment. 

\begin{remark}\label{rem:collapse}
An alternative construction for an estimator of $\var[\hat{\Delta}_n]$ based on our paired strata $\{(\lambda_{2j-1}, \lambda_{2j}):1 \le j \le \lfloor m/2 \rfloor\}$ would be to apply the estimation strategy described in \eqref{eq:var_coarse} to the paired strata directly. However, such an estimator is not guaranteed to be upward-biased in general. In particular, it can be shown that in the case of a matched pairs design (i.e., when $\ell = 1, k = 2$) that this estimator coincides with the estimator $\hat{V}^{\rm alt}_n$ proposed in Section \ref{sec:sims}. There we demonstrate via simulation that confidence intervals for $\Delta_n$ constructed using this estimator may not have correct coverage in general.
\end{remark}

\section{Comparison with Other Variance Estimators} \label{sec:variances}

In this section, we provide a framework that permits us to further discriminate among the variance estimators of $\var[\hat{\Delta}_n]$ discussed in Section \ref{sec:main}.  This framework is formalized by restrictions on the sequence of finite populations and experimental designs $(W^{(n)},\Lambda_n)$ specified in the assumptions below.  These restrictions are motivated by a limiting thought experiment in which $W^{(n)}$ is itself a realization of an i.i.d.\ sample from a fixed probability distribution, and the strata $\Lambda_n$ are formed in such a way that units with similar observable characteristics are grouped together, but, as explained in Remark \ref{rem:equi} below, the restrictions can also be shown to hold in other contexts as well.

\begin{assumption} \label{ass:super}
For some random variable $\tilde W := (\tilde Y(1), \tilde Y(0), \tilde X) \sim Q$ with $\max_{d \in \{0,1\}} \allowbreak E_Q[\tilde Y^2(d)] < \infty$, $W^{(n)}$ and $\Lambda_n$ satisfy the following requirements:
\vspace{-3mm}
\begin{enumerate}
\item[(a)] As $n \to \infty$,
\begin{align*}
    \frac{1}{n} \sum_{1 \leq i \leq n} Y_i^r(d) & \to E_Q[\tilde Y^r(d)] \text{ for } d \in \{0, 1\} \text{ and } r \in \{1, 2\} \\
    \frac{1}{n} \sum_{1 \leq i \leq n} Y_i(1) Y_i(0) & \to E_Q[\tilde Y(1) \tilde Y(0)]~.
\end{align*}
\item[(b)] For any $(d,d') \in \{0,1\}^2$ and sequence $\{(i_j,i_j'): (i_j,i_j') \in (\lambda_{2j - 1}\cup \lambda_{2j})^2, i_j \ne i_j', 1 \le j \le \lfloor m/2 \rfloor\}$, as $n \to \infty$, 
$$\frac{2}{m} \sum_{1 \leq j \leq \lfloor \frac{m}{2} \rfloor} Y_{i_j}(d) Y_{i_j'}(d') \stackrel{}{\rightarrow} E_Q[E_Q[\tilde Y(d) | \tilde X] E_Q[\tilde Y(d') | \tilde X]]~.$$
\end{enumerate}
\end{assumption}
For the analysis of the estimator proposed by \cite{fogarty2018mitigating}, introduced below, we need to further impose the following assumption:
\begin{assumption} \label{ass:X}
For $\tilde W \sim Q$ in Assumption \ref{ass:super}, $W^{(n)}$ and $\Lambda_n$ satisfy the following requirements:
\begin{enumerate}
\item[(a)] As $n \to \infty$,
\[ \frac{1}{n}\max_{1\le i \le n} \lVert X_i \rVert^2 \to 0~. \]

\item[(b)] As $n \to \infty$,
\begin{align*}
    \frac{1}{n} \sum_{1 \leq i \leq n} X_i & \to E_Q[\tilde X] \\
    \frac{1}{n} \sum_{1 \leq i \leq n} X_iX_i' & \to E_Q[\tilde X \tilde X']>0 \\
    \frac{1}{n} \sum_{1 \leq i \leq n} X_iY_i(d) & \to E_Q[\tilde X \tilde Y(d)] \text{ for } d \in \{0,1\} ~.
\end{align*}

\item[(c)] For any $d \in \{0,1\}$ and sequence $\{(i_j,i_j'): (i_j,i_j') \in \lambda_j^2, i_j \ne i_j', 1 \le j \leq m\}$, as $n \rightarrow \infty$,
\begin{align*}
\frac{1}{m} \sum_{1 \leq j \leq m} X_{i_j} X_{i_j'} &\stackrel{}{\rightarrow} E_Q[\tilde X\tilde X']>0 \\
\frac{1}{m} \sum_{1 \leq j \leq m} X_{i_j} Y_{i_j'}(d) &\stackrel{}{\rightarrow} E_Q[\tilde X\tilde Y(d)]~.
\end{align*}
\end{enumerate}
\end{assumption}

Assumption \ref{ass:super}(a) states that the finite population ``moments'' of the outcomes should converge to well-defined limit quantities; the condition holds almost surely by the strong law of large numbers if $W^{(n)}$ is modeled as an i.i.d.\ sample from the probability distribution $Q$. Assumption \ref{ass:super}(b) states that the average products of the outcomes in stratum-pairs should converge to well-defined limit quantities in such a way that their covariate values are also being matched in the limit; this condition can also be justified if $W^{(n)}$ is modeled as an i.i.d.\ sample from $Q$ and the stratification $\Lambda_n$ satisfies the property that 
\begin{align}\label{eq:close}
\frac{1}{m}\sum_{1 \le j \le \lfloor m/2 \rfloor}\max_{i, i' \in (\lambda_{2j - 1} \cup \lambda_{2j})}\|X_i - X_{i'}\|^2 \xrightarrow{P} 0~,
\end{align}
following similar arguments to those used in Lemma C.2 in \cite{bai2024inference}. In words, \eqref{eq:close} requires that the average of the squared distances of the covariate values in a paired stratum converges to zero asymptotically. This property is guaranteed by specific matching algorithms under appropriate assumptions; see, for instance, \cite{bai2022inference} and \cite{cytrynbaum2021optimal} for details. Assumption \ref{ass:X} introduces similar conditions to Assumption \ref{ass:super} but also requires that these conditions hold for the covariate vectors $X^{(n)}$. Note in particular that $E_Q[\tilde X \tilde Y(d)] = E_Q[\tilde X E_Q[\tilde Y(d) | \tilde X]]$ by the law of iterated expectations.

\begin{remark}\label{rem:equi}
Another setting in which Assumptions \ref{ass:super}--\ref{ass:X} are naturally satisfied is as follows. Let $X_i = \frac{i}{n}$ for $1 \leq i \leq n$, so that $X^{(n)}$ are equi-distant in the unit interval. Further assume $Y_i(d) = f_d(X_i)$ for $d \in \{0, 1\}$, where $f_0$ and $f_1$ are deterministic Lipschitz functions. Define $Q$ such that $\tilde{X} \sim \mathrm{Unif}[0, 1]$ under $Q$ and $\tilde Y(d) = f_d(\tilde X)$ degenerately for $d \in \{0, 1\}$. Then, it can be shown using elementary properties of the Riemann integral that Assumptions \ref{ass:super}--\ref{ass:X} hold.  More generally, an alternative limiting thought experiment that would yield similar results allows $W^{(n)}$ to be a realization of a sample taken without replacement from a finite population that is converging (in distribution) to $Q$.
\end{remark}

Using Assumption \ref{ass:super}, we have the following expression for the limit of $\var[\hat{\Delta}_n]$ after normalizing appropriately:

\begin{theorem}\label{thm:limit_V}
Under Assumption \ref{ass:super},
\[n\cdot\var[\hat{\Delta}_n] \rightarrow V~,\]
where
\[V := E_Q\left[\frac{\var_Q[\tilde{Y}(1)|\tilde{X}]}{\eta} + \frac{\var_Q[\tilde{Y}(0)|\tilde{X}]}{1- \eta}\right] - E_Q
\left[\var_Q[\tilde{Y}(1) - \tilde{Y}(0) | \tilde{X}]\right]~.\]
\end{theorem}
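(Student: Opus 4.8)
The plan is to begin from the exact finite-population variance formula \eqref{eq:varDeltahat}, multiply through by $n$, and identify the limit term by term. Using $n=mk$, this reads
\[
n\cdot\var[\hat{\Delta}_n] = \frac{1}{m}\sum_{1\le j\le m}\frac{S_j^2(1)}{\eta} + \frac{1}{m}\sum_{1\le j\le m}\frac{S_j^2(0)}{1-\eta} - \frac{1}{m}\sum_{1\le j\le m}S_{j,\Delta}^2~,
\]
so it suffices to compute the limits of $\frac1m\sum_j S_j^2(d)$ for $d\in\{0,1\}$ and of $\frac1m\sum_j S_{j,\Delta}^2$. To this end I would first record the elementary identity that, for any values $\{a_i:i\in\lambda_j\}$ with mean $\bar a_j$,
\[
\frac{1}{k-1}\sum_{i\in\lambda_j}(a_i-\bar a_j)^2 = \frac{1}{k}\sum_{i\in\lambda_j}a_i^2 - \frac{1}{k(k-1)}\sum_{\substack{i,i'\in\lambda_j\\ i\ne i'}}a_ia_{i'}~,
\]
and apply it with $a_i=Y_i(d)$ for $S_j^2(d)$ and with $a_i=Y_i(1)-Y_i(0)$ for $S_{j,\Delta}^2$ (expanding the latter product into the four cross terms $Y_i(d)Y_{i'}(d')$). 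This splits each of the three averages into a ``diagonal'' part of the form $\frac{1}{mk}\sum_j\sum_{i\in\lambda_j}(\cdot)=\frac1n\sum_{1\le i\le n}(\cdot)$ and an ``off-diagonal'' part built from averages of products $Y_i(d)Y_{i'}(d')$ over distinct within-stratum pairs.

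The diagonal parts are immediate from Assumption \ref{ass:super}(a): $\frac1n\sum_i Y_i(d)^2\to E_Q[\tilde Y^2(d)]$ and $\frac1n\sum_i Y_i(1)Y_i(0)\to E_Q[\tilde Y(1)\tilde Y(0)]$. The off-diagonal parts are where Assumption \ref{ass:super}(b) is used, and this is the step I expect to require the most care. Assumption \ref{ass:super}(b) is stated for a single arbitrary pair per stratum-\emph{pair} $\lambda_{2j-1}\cup\lambda_{2j}$, whereas I need averages over all distinct pairs within each \emph{single} stratum $\lambda_j$. The key point is that the limit in (b) is the same regardless of which sequence of index pairs is selected; I would therefore specialize it to sequences with both indices lying in $\lambda_{2j-1}$ (respectively $\lambda_{2j}$) and average over all $k(k-1)$ ordered such pairs, each choice contributing the identical limit $E_Q[E_Q[\tilde Y(d)|\tilde X]\,E_Q[\tilde Y(d')|\tilde X]]$. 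Splitting $\frac1m\sum_{1\le j\le m}$ into its odd- and even-indexed sub-sums, which correspond to the two strata of each pair (with a single leftover term of order $1/m$ when $m$ is odd), then yields
\[
\frac{1}{m}\sum_{1\le j\le m}\frac{1}{k(k-1)}\sum_{\substack{i,i'\in\lambda_j\\ i\ne i'}}Y_i(d)Y_{i'}(d') \to E_Q\big[E_Q[\tilde Y(d)|\tilde X]\,E_Q[\tilde Y(d')|\tilde X]\big]~.
\]

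Finally I would assemble the pieces via the law of total variance. For $S_j^2(d)$ the diagonal and off-diagonal limits combine to $E_Q[\tilde Y^2(d)]-E_Q[E_Q[\tilde Y(d)|\tilde X]^2]=E_Q[\var_Q[\tilde Y(d)|\tilde X]]$, and the analogous computation with $a_i=Y_i(1)-Y_i(0)$ gives $\frac1m\sum_j S_{j,\Delta}^2\to E_Q[\var_Q[\tilde Y(1)-\tilde Y(0)|\tilde X]]$. Substituting these three limits with weights $1/\eta$, $1/(1-\eta)$, and $-1$ reproduces exactly the expression for $V$. The only genuine subtlety in the argument is the reduction of the within-stratum cross-product averages to Assumption \ref{ass:super}(b); the remainder is bookkeeping together with the law of total variance.
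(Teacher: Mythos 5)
Your proposal is correct and follows essentially the same route as the paper's own proof: the paper starts from the equivalent Cochran-type representation $n\cdot\var[\hat{\Delta}_n] \propto \frac{1}{m}\sum_{1\le j\le m}\sum_{i\in\lambda_j}(R_i-\bar R_j)^2$ with the collapsed variable $R_i = Y_i(1)/\eta + Y_i(0)/(1-\eta)$ rather than from the three-term formula \eqref{eq:varDeltahat}, but the analytic core is identical --- split each within-stratum sample variance into a diagonal part handled by Assumption \ref{ass:super}(a) and a cross-product part handled by Assumption \ref{ass:super}(b), then conclude by the law of total variance (the paper additionally needs the algebraic identity $\eta(1-\eta)\var_Q[\tilde R\mid\tilde X] = \var_Q[\tilde Y(1)\mid\tilde X]/\eta + \var_Q[\tilde Y(0)\mid\tilde X]/(1-\eta) - \var_Q[\tilde Y(1)-\tilde Y(0)\mid\tilde X]$ at the end, which your term-by-term bookkeeping avoids). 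The step you single out as delicate --- reducing within-single-stratum cross-product averages to Assumption \ref{ass:super}(b) by specializing the index sequences to lie in one stratum of each pair, averaging over the $k(k-1)$ ordered choices, and splitting the sum over strata into odd and even positions --- is exactly the argument the paper uses to establish \eqref{eq:R2}, and you actually spell it out (including the unpaired leftover stratum when $m$ is odd) more explicitly than the paper does.
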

We can now motivate the construction of $\hat V_n$ in Section \ref{sec:main_var} through Theorem \ref{thm:limit_V}.  For the reasons outlined in Remark \ref{rem:better_bounds}, we focus on estimating
\begin{equation} \label{eq:Vobs}
V^{\rm obs} := E_Q\left[\frac{\var_Q[\tilde{Y}(1)|\tilde{X}]}{\eta} + \frac{\var_Q[\tilde{Y}(0)|\tilde{X}]}{1- \eta}\right]~,    
\end{equation}
which is an upper bound for $V$. The construction of $\hat{V}_n$ can be motivated following arguments similar to those used in our earlier work on the analysis of finely stratified experiments from a super-population perspective \citep[][]{bai2022inference,bai2024inference-b,bai2024inference,bai2025efficiency}. First, by expanding the square, it can be shown under Assumption \ref{ass:super} that
\begin{align*}
k \cdot \hat{\tau}_n^2 & \stackrel{P}{\to} \frac{1}{\eta} E_Q[\var_Q[\tilde Y(1) | \tilde X]] + \frac{1}{1 - \eta} E_Q[\var_Q[\tilde Y(0) | \tilde X]] + k E_Q[E_Q[\tilde Y(1) - \tilde Y(0) | \tilde X]^2] \\
& = V^{\rm obs} + k E_Q[E_Q[\tilde Y(1) - \tilde Y(0) | \tilde X]^2]~.
\end{align*}
It thus remains to construct a consistent estimator of the final term and subtract it from $\hat{\tau}^2_n$. For this purpose, it can also be shown under Assumption \ref{ass:super} that
\begin{equation} \label{eq:kappa}
k \cdot \hat \kappa_n \stackrel{P}{\to} k E_Q[E_Q[\tilde Y(1) - \tilde Y(0) | \tilde X]^2]~. 
\end{equation}
To understand why it is natural to expect $\hat \kappa_n$ to satisfy \eqref{eq:kappa} under Assumption \ref{ass:super}(b), note that $\hat \kappa_n$ is the average of products of the differences in means in a pair of strata.  Each pair of differences in means are independent conditional on the stratification.  Furthermore, for a sufficiently good stratification, each of them has conditional expectation close to $E_Q[\tilde Y(1) - \tilde Y(0) | \tilde X]$ for the same value of $\tilde X$.  For this reason, we expect $\hat \kappa_n$ to converge in the desired way.

In the remainder of the section we compare the asymptotic variance $V$ to the (appropriately scaled) limit of $\hat{V}_n$ and to the limits of the variance estimators proposed in \cite{imai2008variance} (i.e., $\hat{V}^{\rm IM}_n$), as well as the variance estimator proposed by \cite{fogarty2018mitigating}, which we now describe. Let $R$ be an $m \times L$ matrix for $L < m$ with rank $L$ and $H_R = R (R' R)^{-1} R'$ denote the projection matrix onto its column space. For each $R$, consider the estimator given by
\begin{equation} \label{proj_var_est}
\hat V_n^{\rm F}(R) =\frac{1}{m^2}\hat{\delta}_n' (\diag(I-H_R))^{-1/2}(I-H_R)(\diag(I-H_R))^{-1/2}\hat{\delta}_n,
\end{equation}
where $\hat{\delta}_n=(\hat \Delta_{1, n},\dots,\hat \Delta_{m, n})'$. Note that $\hat{V}_n^{\rm F}(\iota_m) = \hat{V}_n^{\rm IM}$, where $\iota_m$ is an $m \times 1$ vector of ones, so that $\hat{V}_n^{\rm F}(R)$ is indeed a generalization of $\hat{V}^{\rm IM}_n$. \cite{fogarty2018mitigating} focuses on the estimator given by $\hat V_n^{\rm F} = \hat V_n^{\rm F}(Q_2)$, where
\[ Q_2 = \begin{pmatrix}
     & (\bar X_{1, n} - \mu_{X, n})' \\
    \iota_m & \vdots \\
     & (\bar X_{m, n} - \mu_{X, n})'
\end{pmatrix}\] 
for $\bar X_{j, n} = \frac{1}{k} \sum_{i \in \lambda_j} X_i$ and $\mu_{X, n} = \frac{1}{n} \sum_{1 \leq i \leq n} X_i$. In particular, \cite{fogarty2018mitigating} argues that $\hat{V}_n^{\rm F}$ can be less conservative than $\hat{V}_n^{\rm IM}$ whenever the covariates are predictive of the treatment effects in an appropriate sense. The bias for $\hat V_n^{\rm F}$ is \citep[see][]{fogarty2018mitigating}
\[ E[\hat V_n^{\rm F}] - \var[\hat \Delta_n] = \frac{1}{m^2} \delta_n' (\diag(I-H_{Q_2}))^{-1/2}(I-H_{Q_2})(\diag(I-H_{Q_2}))^{-1/2} \delta_n \geq 0~,  \]
where $\delta_n = (\Delta_{1, n}, \dots, \Delta_{m, n})'$, so $\hat V_n^{\rm F}$ is also an upward-biased estimator for $\var[\hat \Delta_n]$.

 As discussed in Section \ref{sec:main}, it is possible to establish under weak assumptions on the sequence of population ``moments'' that each of these variances is consistent for their expectations in the sense of \eqref{eq:consistent}. Accordingly, in what follows it suffices to study the limit of the expectations of these estimators under Assumption \ref{ass:super}, and compare them to $V$.

\begin{theorem}\label{thm:main_var}
Under Assumption \ref{ass:super},
\begin{align*}
n \cdot E[\hat{V}_n] & \to V^{\rm obs}  \\
n \cdot E[\hat{V}_n^{\rm IM}] & \to V^{\rm obs} + k \cdot \var_Q[E_Q[\tilde{Y}(1) - \tilde{Y}(0)|\tilde{X}]] =: V^{\rm IM}~,
\end{align*}
for $V^{\rm obs}$ defined in \eqref{eq:Vobs}. Under Assumptions \ref{ass:super} and \ref{ass:X},
\[ n \cdot E[\hat{V}^{\rm F}_n] \to V^{\rm obs} + k \cdot \var_Q \Big [ E_Q[\tilde{Y}(1)-\tilde{Y}(0)|\tilde{X}] - \blp_Q(\tilde{Y}(1)-\tilde{Y}(0)|1, \tilde{X}) \Big ] =: V^{\rm F} \leq V^{\rm IM}~, \]
where $\blp$ denotes the best linear predictor, i.e., the unique solution under $Q$ such that $E[(\tilde Y(1) - \tilde Y(0) -  \blp_Q(\tilde{Y}(1)-\tilde{Y}(0)|1, \tilde{X})) (1, \tilde X)'] = 0$. Furthermore, if \eqref{eq:pop_moments}--\eqref{eq:consistent} hold, then for $\alpha \in (0, 1)$,
\begin{align*}
&P \left\{\hat \Delta_n - \sqrt{\hat V_n}\cdot z_{1 - \alpha / 2} \le  \Delta_n \le \hat \Delta_n + \sqrt{\hat V_n}\cdot z_{1 - \alpha / 2} \right\}   \to 1 - 2 \Phi(z_{\alpha / 2} \varsigma) \\
&P \left\{\hat \Delta_n - \sqrt{\hat V^{\rm IM}_n}\cdot z_{1 - \alpha / 2} \le  \Delta_n \le \hat \Delta_n + \sqrt{\hat V_n^{\rm IM}}\cdot z_{1 - \alpha / 2} \right\}  \to 1 - 2 \Phi(z_{\alpha / 2} \varsigma^{\rm IM}) \\
&P \left\{\hat \Delta_n - \sqrt{\hat V^{\rm F}_n}\cdot z_{1 - \alpha / 2} \le  \Delta_n \le \hat \Delta_n + \sqrt{\hat V_n^{\rm F}}\cdot z_{1 - \alpha / 2} \right\}  \to 1 - 2 \Phi(z_{\alpha / 2} \varsigma^{\rm F}) 
\end{align*}
where $\varsigma = (V / V^{\rm obs})^{1/2}$, $\varsigma^{\rm IM} = (V / V^{\rm IM})^{1/2}$, $\varsigma^{\rm F} = (V / V^{\rm F})^{1/2}$, and $\varsigma^{\rm IM} \leq \varsigma^{\rm F} \leq \varsigma \leq 1$.
\end{theorem}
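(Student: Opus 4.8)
The plan is to obtain all three limiting expectations from the corresponding \emph{exact} bias formulas, each of which is a deterministic quadratic form in $(\Delta_{1,n},\dots,\Delta_{m,n})$; this lets me compute every limit directly from Assumption~\ref{ass:super} (and, for the Fogarty estimator, Assumption~\ref{ass:X}) and avoids invoking \eqref{eq:consistent} for the first two displays. For $\hat V_n$, Theorem~\ref{thm:main}(a) gives $nE[\hat V_n]=n\var[\hat\Delta_n]+n\xi_n^2$; Theorem~\ref{thm:limit_V} supplies $n\var[\hat\Delta_n]\to V$, so it remains to show $n\xi_n^2\to V^{\rm obs}-V$. Writing $\gamma_i:=Y_i(1)-Y_i(0)$ and using $n=mk$, one has $n\xi_n^2=\tfrac1{mk}\sum_{j}\big(\sum_{i\in\lambda_{2j-1}}\gamma_i-\sum_{i\in\lambda_{2j}}\gamma_i\big)^2$, and expanding the square separates this into a diagonal part $\tfrac1n\sum_i\gamma_i^2$ and an off-diagonal part made of products $\gamma_i\gamma_{i'}$ with $i\ne i'$ ranging over each stratum-pair. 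For $\hat V_n^{\rm IM}$ I proceed identically from \eqref{eq:imai-bias}: since $\tfrac1m\sum_j\Delta_{j,n}=\Delta_n$, $nE[\hat V_n^{\rm IM}]=n\var[\hat\Delta_n]+\tfrac{k}{m-1}\sum_j(\Delta_{j,n}-\Delta_n)^2$, and the same diagonal/off-diagonal split applies, now with the products taken within single strata.

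The two limits that drive everything are these. The diagonal part is controlled by Assumption~\ref{ass:super}(a): expanding $\gamma_i^2$ gives $\tfrac1n\sum_i\gamma_i^2\to E_Q[(\tilde Y(1)-\tilde Y(0))^2]$. The off-diagonal products are controlled by Assumption~\ref{ass:super}(b) (applied to the odd- and even-indexed strata separately in the $\hat V_n^{\rm IM}$ case): writing $\gamma_i\gamma_{i'}=\sum_{d,d'\in\{0,1\}}(-1)^{d+d'}Y_i(d)Y_{i'}(d')$, each average of such products over the index sequence converges to $E_Q[E_Q[\tilde Y(d)|\tilde X]\,E_Q[\tilde Y(d')|\tilde X]]$, and the key structural fact is that this limit does not depend on whether $i,i'$ sit in the same or in different strata of a pair, because their covariates are matched in the limit; hence every off-diagonal average collapses to $E_Q[\mu_\Delta(\tilde X)^2]$, where $\mu_\Delta(\tilde X):=E_Q[\tilde Y(1)-\tilde Y(0)|\tilde X]$. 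Counting the $2k(k-1)$ same-stratum and $2k^2$ cross-stratum ordered pairs per block, with the signs inherited from the squared difference, yields $n\xi_n^2\to E_Q[(\tilde Y(1)-\tilde Y(0))^2]-E_Q[\mu_\Delta(\tilde X)^2]=E_Q[\var_Q[\tilde Y(1)-\tilde Y(0)|\tilde X]]=V^{\rm obs}-V$ by the law of total variance, so $nE[\hat V_n]\to V^{\rm obs}$. For $\hat V_n^{\rm IM}$ the analogous bookkeeping (within single strata, with $\tfrac{k}{m-1}\sim\tfrac km$) produces, relative to $\hat V_n$, the additional term $k\var_Q[\mu_\Delta(\tilde X)]$, giving $nE[\hat V_n^{\rm IM}]\to V^{\rm IM}$.

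The Fogarty estimator is where the real work lies, and I expect it to be the main obstacle. From its bias formula, $nE[\hat V_n^{\rm F}]=n\var[\hat\Delta_n]+\tfrac{k}{m}\,\delta_n'(\diag(I-H_{Q_2}))^{-1/2}(I-H_{Q_2})(\diag(I-H_{Q_2}))^{-1/2}\delta_n$. The first step is to show the diagonal renormalization is asymptotically negligible: the $j$-th leverage is $h_{jj}=\tfrac1m q_j'(\tfrac1m Q_2'Q_2)^{-1}q_j$ with $q_j=(1,(\bar X_{j,n}-\mu_{X,n})')'$, and Assumption~\ref{ass:X}(a) gives $\max_j\|q_j\|^2=o(m)$ while $\tfrac1m Q_2'Q_2\to\diag(1,\var_Q[\tilde X])$ (the lower block via the same diagonal/off-diagonal split applied to $\tfrac1m\sum_j\bar X_{j,n}\bar X_{j,n}'$ using Assumptions~\ref{ass:X}(b)--(c)); hence $\max_j h_{jj}\to0$, the renormalization factors tend uniformly to one, and since the leading form is $O(1)$ it may be replaced by $\tfrac{k}{m}\delta_n'(I-H_{Q_2})\delta_n$ up to $o(1)$. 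The second step evaluates $\tfrac{k}{m}\delta_n'(I-H_{Q_2})\delta_n=\tfrac{k}{m}\delta_n'\delta_n-k\,(\tfrac1m Q_2'\delta_n)'(\tfrac1m Q_2'Q_2)^{-1}(\tfrac1m Q_2'\delta_n)$, where $\tfrac1m Q_2'\delta_n\to(E_Q[\mu_\Delta],\ \cov_Q[\tilde X,\mu_\Delta(\tilde X)])'$ again by the diagonal/off-diagonal machinery with Assumption~\ref{ass:X}. Writing $\blp$ for $\blp_Q(\tilde Y(1)-\tilde Y(0)|1,\tilde X)$, the projected term equals $k\,E_Q[\blp^2]$, and combining with $\tfrac{k}{m}\delta_n'\delta_n\to E_Q[(\tilde Y(1)-\tilde Y(0))^2]+(k-1)E_Q[\mu_\Delta^2]$ together with the orthogonality identity $E_Q[\mu_\Delta^2]-E_Q[\blp^2]=\var_Q[\mu_\Delta-\blp]$ (the BLP residual is uncorrelated with the BLP) collapses the expression to $V^{\rm F}$. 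Finally, $V\le V^{\rm obs}\le V^{\rm F}\le V^{\rm IM}$, and hence $\varsigma^{\rm IM}\le\varsigma^{\rm F}\le\varsigma\le1$, follows at once from the decomposition $\var_Q[\mu_\Delta]=\var_Q[\mu_\Delta-\blp]+\var_Q[\blp]$.

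The three coverage statements then reduce to one Slutsky argument. Under \eqref{eq:pop_moments}--\eqref{eq:pop_nondegen} the design-based central limit theorem gives $\sqrt n(\hat\Delta_n-\Delta_n)\xrightarrow{d}N(0,V)$, and \eqref{eq:consistent} together with the limits of $nE[\tilde V_n]$ just established yields $n\hat V_n\xrightarrow{P}V^{\rm obs}$, $n\hat V_n^{\rm IM}\xrightarrow{P}V^{\rm IM}$, and $n\hat V_n^{\rm F}\xrightarrow{P}V^{\rm F}$. Hence each studentized statistic $(\hat\Delta_n-\Delta_n)/\sqrt{\tilde V_n}$ converges in distribution to a centered Gaussian with variance $V/V^{\rm obs}=\varsigma^2$ (respectively $(\varsigma^{\rm IM})^2$, $(\varsigma^{\rm F})^2$), so the coverage probability converges to the centered-Gaussian tail probability $P\{|N(0,\varsigma^2)|\le z_{1-\alpha/2}\}$, as claimed; the monotonicity of this probability in $\varsigma$ turns the variance ordering into the stated coverage ordering.
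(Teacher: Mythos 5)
Your proposal is correct, and it reaches every limit the paper reaches, but by a genuinely different organization. For $n E[\hat V_n]$ and $n E[\hat V_n^{\rm IM}]$ you start from the exact bias identities (Theorem \ref{thm:main}(a) and \eqref{eq:imai-bias}) together with Theorem \ref{thm:limit_V}, so all that remains is the limit of a deterministic quadratic form in $(\Delta_{1,n},\dots,\Delta_{m,n})$; the paper instead recomputes $E[\hat\tau_n^2]$ and $E[\hat\kappa_n]$ from scratch via Cochran's sampling moments and never routes through $\var[\hat\Delta_n]$. The underlying convergence machinery is identical (the diagonal/off-diagonal split under Assumption \ref{ass:super}(b), including your explicit odd/even-strata splitting to handle within-single-stratum products, a step the paper leaves implicit), so your route buys brevity by reusing stated results, at the cost of leaning on the cited bias formulas rather than re-deriving them. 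For $\hat V_n^{\rm F}$, the paper derives the bias decomposition itself via Lemma \ref{lem:quadratic} and the exact cancellation of the leverage normalization inside the trace term, then splits $H_{Q_2}=H_{\iota_m}+H_{Q_{2,X}}$; you instead take Fogarty's bias formula as given (it is stated with citation in Section \ref{sec:variances}), kill the $(\diag(I-H_{Q_2}))^{-1/2}$ normalization through $\max_{1\le j\le m} [H_{Q_2}]_{jj}\to0$, and expand $\tfrac{k}{m}\delta_n'(I-H_{Q_2})\delta_n$ directly, identifying the projected term with $k\,E_Q[\blp_Q(\tilde Y(1)-\tilde Y(0)\,|\,1,\tilde X)^2]$ --- equivalent algebra, same key inputs ($\Sigma_{X,n}\to\var_Q[\tilde X]$ via Assumption \ref{ass:X}, and the BLP orthogonality of Lemma \ref{lem:blp}). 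One point worth flagging in your favor: your coverage limit $P\{|N(0,\varsigma^2)|\le z_{1-\alpha/2}\}$ equals $1-2\Phi(z_{\alpha/2}/\varsigma)$, not $1-2\Phi(z_{\alpha/2}\,\varsigma)$ as literally written in the theorem; since $\varsigma\le 1$, only the former is $\ge 1-\alpha$ and is monotone in the direction that makes $\hat V_n^{\rm IM}$ the most conservative, so the displayed formula in the statement contains a typo ($\varsigma$ should divide $z_{\alpha/2}$, or equivalently the $\varsigma$'s should be defined as the reciprocal ratios), and your version is the intended and correct one.
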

Theorem \ref{thm:main_var} demonstrates that $\hat{V}_n$ converges to our desired upper bound $V^{\rm obs}$, but that in general $\hat{V}^{\rm IM}_n$ and $\hat{V}^{\rm F}_n$ do not. Specifically, $\hat{V}^{\rm IM}_n$ only attains $V^{\rm obs}$ when treatment effects are sufficiently homogeneous, in the sense that $E_Q[\tilde{Y}(1) - \tilde{Y}(0)|\tilde{X}]$ is constant; $\hat{V}^{\rm F}_n$ only attains $V^{\rm obs}$ when the conditional average treatment effect $E_Q[\tilde{Y}(1) - \tilde{Y}(0)|\tilde X]$ is linear and therefore equal to the best linear predictor of $\tilde{Y}(1)-\tilde{Y}(0)$ given $1$ and $\tilde{X}$. As a consequence, Theorem \ref{thm:main_var} demonstrates that, although confidence intervals constructed based on these variance estimators all cover $\Delta_n$ with probability at least $1 - \alpha$ in the limit, the confidence interval based on $\hat{V}_n$ is the least conservative whenever Assumptions \ref{ass:super} (and \ref{ass:X}) are satisfied. We further illustrate this phenomenon via simulation in Section \ref{sec:sims}. 

\begin{remark}
Our focus in this paper has been primarily on settings where $\min\{\ell, k - \ell\} = 1$, since, as explained in Section \ref{sec:review}, this case poses particular analytical challenges. We note however that $\hat{V}_n$ is still well-defined even in the case where $\min\{\ell, k - \ell\} > 1$, and Theorems \ref{thm:main} and \ref{thm:main_var} apply equally well to this setting. However, in this case, we note that the estimation strategy proposed in \eqref{eq:var_coarse} also provides valid inference, and it can be shown under appropriate assumptions that its corresponding probability limit is also $V^{\rm obs}$.
\end{remark}


\section{Simulations} \label{sec:sims}
In this section we examine the finite-$n$ behavior of several confidence intervals of the form described in \eqref{eq:valid}, constructed using different variance estimators. The variance estimators we consider are our proposed estimator $\hat{V}_n$, the \cite{imai2008variance} estimator $\hat{V}^{\rm IM}_n$, the \cite{fogarty2018mitigating} estimator $\hat{V}^{\rm F}_n$, and an alternative variance estimator $\hat{V}^{\rm alt}_n$ which can also be motivated using Theorem \ref{thm:limit_V} in our limiting thought experiment. Importantly, however, $\hat{V}^{\rm alt}_n$ is \emph{not} necessarily upward-biased under Assumption \ref{ass:dn} alone. As a consequence, we will demonstrate that the resulting confidence interval will not appropriately cover $\Delta_n$ whenever the limiting thought experiment described in Assumption \ref{ass:super} fails to hold. To construct $\hat{V}^{\rm alt}_n$, consider the following decomposition:
\begin{align*}
E_Q[\var_Q[\tilde{Y}(d)|\tilde{X}]] & = E_Q[\tilde{Y}(d)^2] - E_Q[E_Q[\tilde{Y}(d)|\tilde{X}]^2] \\
& = \var_Q[\tilde{Y}(d)] + E_Q[\tilde{Y}(d)]^2 - E_Q[E_Q[\tilde{Y}(d)|\tilde{X}]^2]~.    
\end{align*}
Under Assumption \ref{ass:super}(a), straightforward consistent estimators of $\var_Q[\tilde{Y}(d)]$ and $E_Q[\tilde{Y}(d)]^2$ are given by $\hat{\sigma}^2_n(d)$ and $\hat{\mu}_n(d)^2$, where
\begin{align*}
\hat{\mu}_n(d) & = \frac{1}{n(d)}\sum_{1 \le i \le n}Y_iI\{D_i = d\} \\
\hat{\sigma}^2_n(d) & = \frac{1}{n(d)}\sum_{1 \le i \le n}(Y_i - \hat{\mu}_n(d))^2I\{D_i = d\}~.
\end{align*}
Under Assumption \ref{ass:super}(b), an estimator of $E_Q[E_Q[\tilde{Y}(d)|\tilde{X}]^2]$ is 
\[ \hat \varsigma_n(d) =  \frac{2}{m} \sum\limits_{1 \leq j \leq \lfloor\frac{m}{2}\rfloor} {\sum_{i \in \lambda_{2j}, i' \in \lambda_{2j - 1}}} Y_iY_{i'}I\{D_i = D_{i'} = d\}~.\]
   An alternative estimator of $V^{\rm obs}$ under Assumption \ref{ass:super} is thus given by 

\[\hat{V}^{\rm alt}_n := \frac{1}{n}\left(\frac{\hat{\sigma}^2_n(1) + \hat{\mu}_n(1)^2 - \hat{\varsigma}_n(1)}{\eta} + \frac{\hat{\sigma}^2_n(0) + \hat{\mu}_n(0)^2  - \hat{\varsigma}_n(0)}{1 - \eta} \right)~.\]
We remark that, in the case of matched pairs, $\hat{V}_n^{\rm alt}$ also coincides with the estimator that would be obtained if we applied the estimation strategy in \eqref{eq:var_coarse} to the paired strata directly; see Remark \ref{rem:collapse} for further discussion.

We generate our population $((Y_i(1), Y_i(0), X_i): 1 \le i \le n)$ using i.i.d.\ draws from a probability distribution. The potential outcomes are generated according to the equation:
\begin{equation*}
Y_i(d) = \mu_d + \mu_d(X_i) + \epsilon_{di}~,
\end{equation*}
where $((X_i, \epsilon_{0i}, \epsilon_{1i}): 1\leq i \leq n)$ are i.i.d., $(X_i, \epsilon_{0i}, \epsilon_{1i})$ are independent with $X_i \sim U[0,1]$, $\epsilon_{di} \sim N(0, 1)$, the parameters $\mu_d$ are given by $\mu_1 = 0.25$, $\mu_0 = 0$, and $\mu_d(\cdot)$ are specified as
\begin{enumerate}[{\bf Model} 1:]
	\item $\mu_{0}(X_i) = 20(X_i - 1/2)$, $\mu_{1}(X_i) = 10(X_i - 1/2)$ 
	\item $\mu_0(X_i) = 40(X_i^2 - 4/3)$, $\mu_1(X_i) = 10(X_i^2 - 4/3)$~.
\end{enumerate}
For a population of size $n = 1000$, the potential outcomes and covariates are generated as i.i.d.\ draws \emph{once} and then fixed in repeated samples. When considering populations of size $n \in \{100,250,500,750\}$, we draw a subsample from our population of size $1000$ \emph{once} and then fix this in repeated samples. 

In each Monte Carlo iteration, we assign units to treatment using a matched pairs design (i.e., a finely stratified design with $\ell = 1$ and $k = 2$), under two alternative matching methods, which we call ``good match'' and ``bad match.'' For ``good match,'' we sort units in increasing values of $X_i$ and then pair adjacent units. For ``bad match,'' we sort units in increasing values of $X_i$ and then pair them such that the unit with the \emph{smallest} value of $X$ is matched with the unit with the \emph{largest} value of $X$, the second smallest with the second largest, and so on. In both cases, when re-arranging the strata to construct our paired strata $\{(\lambda_{2j-1}, \lambda_{2j}): 1 \le j \le \lfloor m/2 \rfloor\}$, we sort strata in increasing values of their covariate \emph{averages} and then pair adjacent pairs. As a result, the ``good match'' design minimizes the average squared distances of the covariate values in paired strata and can be shown to satisfy the conditions of our limiting thought experiment in Assumptions \ref{ass:super} and \ref{ass:X}. In contrast, the ``bad match'' design is constructed specifically to fail the conditions of the limiting thought experiment.

Tables \ref{tab:sims_linear} and \ref{tab:sims_quadratic} report the coverage and average length of $95\%$ confidence intervals computed across 5000 Monte Carlo replications. When matches are good, we see that all estimators have appropriate (albeit conservative) coverage. In Table \ref{tab:sims_linear} with good matches, the average length produced by $\hat{V}^{\rm IM}$ is largest while the average lengths produced by the other three estimators are comparable; this is in line with the theoretical results presented in Theorem \ref{thm:main_var} given that the conditional average treatment effect is linear under Model 1. In Table \ref{tab:sims_quadratic} with good matches, the average lengths produced by $\hat{V}_n$ and $\hat{V}_n^{\rm alt}$ are shortest, again in line with the theoretical results presented in Theorem \ref{thm:main_var}.

When matches are bad, the conditions in Theorem \ref{thm:main_var} no longer apply, and instead we must rely on the asymptotic validity provided by Theorem \ref{thm:upward} alone. With this in mind we see that in both models, the estimators $\hat{V}_n^{\rm IM}$, $\hat{V}_n^{\rm F}$ and $\hat{V}_n$ all provide appropriate coverage; this is in line with our theoretical results given that all three estimators are upward-biased under Assumption \ref{ass:dn}. The average lengths of the confidence intervals based on these estimators are comparable. On the other hand, $\hat{V}_n^{\rm alt}$, which is not guaranteed to be upward-biased in general, undercovers even in large populations. 

\begin{table}[ht]
    \centering
    \begin{tabular}{cccccc}
\toprule
        \textbf{Population size} & \textbf{Estimator} &
        \multicolumn{2}{c}{\textbf{Good matches}} &
        \multicolumn{2}{c}{\textbf{Bad matches}} \\
        \cmidrule(lr){3-4} \cmidrule(lr){5-6}
        & & Coverage & Length & Coverage & Length \\
        \hline
\multirow{4}{*}{100}
& $\hat{V}_n^{\rm IM}$ & 1.000 & 1.963 & 0.944 & 4.681 \\
& $\hat{V}_n^{\rm F}$ & 0.985 & 0.842 & 0.938 & 4.671 \\
& $\hat{V}_n$ & 0.976 & 0.793 & 0.929 & 4.634 \\
& $\hat{V}_n^{\rm alt}$ & 0.989 & 0.859 & 0.844 & 3.547 \\
\hline
\multirow{4}{*}{250}
& $\hat{V}_n^{\rm IM}$ & 1.000 & 1.189 & 0.946 & 3.018 \\
& $\hat{V}_n^{\rm F}$ & 0.989 & 0.511 & 0.944 & 3.016 \\
& $\hat{V}_n$ & 0.992 & 0.534 & 0.942 & 3.005 \\
& $\hat{V}_n^{\rm alt}$ & 0.998 & 0.618 & 0.863 & 2.281 \\
\hline
\multirow{4}{*}{500}
& $\hat{V}_n^{\rm IM}$ & 1.000 & 0.825 & 0.952 & 2.155 \\
& $\hat{V}_n^{\rm F}$ & 0.993 & 0.368 & 0.952 & 2.155 \\
& $\hat{V}_n$ & 0.992 & 0.369 & 0.951 & 2.153 \\
& $\hat{V}_n^{\rm alt}$ & 0.992 & 0.361 & 0.862 & 1.628 \\
\hline
\multirow{4}{*}{750}
& $\hat{V}_n^{\rm IM}$ & 1.000 & 0.651 & 0.950 & 1.745 \\
& $\hat{V}_n^{\rm F}$ & 0.994 & 0.293 & 0.950 & 1.745 \\
& $\hat{V}_n$ & 0.994 & 0.294 & 0.948 & 1.743 \\
& $\hat{V}_n^{\rm alt}$ & 0.997 & 0.312 & 0.858 & 1.316 \\
\hline
\multirow{4}{*}{1000}
& $\hat{V}_n^{\rm IM}$ & 1.000 & 0.565 & 0.949 & 1.518 \\
& $\hat{V}_n^{\rm F}$ & 0.995 & 0.247 & 0.949 & 1.518 \\
& $\hat{V}_n$ & 0.995 & 0.245 & 0.950 & 1.516 \\
& $\hat{V}_n^{\rm alt}$ & 0.996 & 0.246 & 0.857 & 1.143 \\
\bottomrule
    \end{tabular}
        \caption{Coverage probabilities and average length (Model 1)}
\label{tab:sims_linear}
\end{table}

\begin{table}[ht]
    \centering
    \begin{tabular}{cccccc}
\toprule
        \textbf{Population size} & \textbf{Estimator} &
        \multicolumn{2}{c}{\textbf{Good matches}} &
        \multicolumn{2}{c}{\textbf{Bad matches}} \\
        \cmidrule(lr){3-4} \cmidrule(lr){5-6}
        & & Coverage & Length & Coverage & Length \\
        \hline
\multirow{4}{*}{100}
& $\hat{V}_n^{\rm IM}$ & 1.000 & 5.288 & 0.944 & 8.185 \\
& $\hat{V}_n^{\rm F}$ & 1.000 & 1.389 & 0.939 & 8.173 \\
& $\hat{V}_n$ & 0.980 & 0.848 & 0.930 & 8.102 \\
& $\hat{V}_n^{\rm alt}$ & 0.990 & 0.910 & 0.880 & 6.813 \\
\hline
\multirow{4}{*}{250}
& $\hat{V}_n^{\rm IM}$ & 1.000 & 3.274 & 0.947 & 5.174 \\
& $\hat{V}_n^{\rm F}$ & 1.000 & 0.909 & 0.944 & 5.153 \\
& $\hat{V}_n$ & 1.000 & 0.812 & 0.940 & 5.133 \\
& $\hat{V}_n^{\rm alt}$ & 1.000 & 0.992 & 0.892 & 4.292 \\
\hline
\multirow{4}{*}{500}
& $\hat{V}_n^{\rm IM}$ & 1.000 & 2.256 & 0.952 & 3.587 \\
& $\hat{V}_n^{\rm F}$ & 1.000 & 0.675 & 0.952 & 3.580 \\
& $\hat{V}_n$ & 0.992 & 0.370 & 0.951 & 3.567 \\
& $\hat{V}_n^{\rm alt}$ & 0.992 & 0.363 & 0.905 & 3.003 \\
\hline
\multirow{4}{*}{750}
& $\hat{V}_n^{\rm IM}$ & 1.000 & 1.816 & 0.951 & 2.919 \\
& $\hat{V}_n^{\rm F}$ & 1.000 & 0.548 & 0.951 & 2.918 \\
& $\hat{V}_n$ & 1.000 & 0.359 & 0.950 & 2.913 \\
& $\hat{V}_n^{\rm alt}$ & 1.000 & 0.406 & 0.908 & 2.459 \\
\hline
\multirow{4}{*}{1000}
& $\hat{V}_n^{\rm IM}$ & 1.000 & 1.588 & 0.952 & 2.552 \\
& $\hat{V}_n^{\rm F}$ & 1.000 & 0.462 & 0.952 & 2.545 \\
& $\hat{V}_n$ & 0.995 & 0.245 & 0.950 & 2.537 \\
& $\hat{V}_n^{\rm alt}$ & 0.996 & 0.246 & 0.898 & 2.130 \\
\bottomrule
    \end{tabular}
    \caption{Coverage probabilities and average length (Model 2)}
    \label{tab:sims_quadratic}
\end{table}

\section{Recommendations for Empirical Practice}
Based on our theoretical results as well as the simulation study above, we conclude with some recommendations for empirical practice. Overall, we argue that our novel estimator $\hat{V}_n$ is well suited for performing design-based inference on the average treatment effect, particularly in settings with $\min\{\ell, k - \ell\} = 1$. Since the magnitude of the bias of this estimator depends explicitly on the quality of the pairing of the strata, we recommend that practitioners form the strata pairs by applying an optimal non-bipartite matching algorithm \citep[see, for instance,][]{greevy2004optimal} to the stratum-level covariate means $\bar{X}_j = \frac{1}{k}\sum_{i \in \lambda_j}X_i$; this algorithm is available in the {\tt R} package {\tt nbpMatching}.

An interesting feature of our results relative to our findings in prior work on the analysis of randomized experiments from a super-population perspective \citep[][]{bai2022inference,bai2024inference-b,bai2024inference,bai2025efficiency} is that, by appealing directly to Theorem \ref{thm:main}, asymptotically valid inference on $\Delta_n$ is possible under minimal assumptions on the nature of the matching procedure. In contrast, inference on the average treatment effect from a super-population perspective seems to require an assumption like \eqref{eq:close}. However, we note that inferences based on the design-based estimator $\hat{V}_n$ are \emph{not} guaranteed to be valid if we view the sample as being drawn from a larger (finite or super-) population. In such cases, it can be shown that the super-population variance estimators proposed in \cite{bai2022inference} \citep[see also][]{bai2024inference, bai2025efficiency} remain valid in the limiting thought experiment of Section \ref{sec:variances}.

\clearpage
\appendix

\section{Proofs of Main Results}

\subsection{Proof of Theorem \ref{thm:upward}}
First, we show that Assumptions \eqref{eq:pop_moments} and \eqref{eq:pop_nondegen} imply the conditions in Theorem \ref{thm:normal}, so that 
\begin{equation}\label{eq:normal}
\frac{\hat \Delta_n - \Delta_n}{\var[\hat \Delta_n]^{1/2}} \stackrel{d}{\rightarrow} N(0,1)~.
\end{equation}
To that end, we first show 
\begin{equation}\label{eq:num_zero}
\frac{1}{n}\max_{1 \leq j \leq m} \max_{i \in \lambda_j} (R_i - \bar R_j)^2 \rightarrow 0~,
\end{equation}
where $R_i$, $\bar{R}_j$ are defined in the statement of Theorem \ref{thm:normal}, and then show that 
$\frac{1}{n}\sum_{1 \leq j \leq m} \sum_{i \in \lambda_j} (R_i - \bar R_j)^2$ is bounded away from zero.
Note that from the inequality $(a + b)^2 \le 2(a^2 + b^2)$, 
\[\max_{i \in \lambda_j}(R_i - \bar{R}_j)^2 \lesssim \max_{i \in \lambda_j}\frac{1}{\eta^2}(Y_i(1) - \bar Y_{j, n}(1))^2 + \frac{1}{(1 - \eta)^2}\max_{i \in \lambda_j}(Y_i(0) - \bar Y_{j, n}(0))^2~.\]
Since by the triangle inequality $\max_{i \in \lambda_j}|Y_i(d) - \bar Y_{j, n}(d)| \le 2\max_{i \in \lambda_j}|Y_i(d)|$, 
\begin{align*}
\frac{1}{n}\max_{1 \leq j \leq m}\max_{i \in \lambda_j}(Y_i(d) - \bar Y_{j, n}(d))^2 &\lesssim \frac{1}{n}\max_{1 \leq i \leq n}Y_i(d)^2 \rightarrow 0~,
\end{align*}
where the final convergence follows from \eqref{eq:pop_moments}. \eqref{eq:num_zero} therefore follows. Next, since it is shown in the proof of Theorem \ref{thm:normal} that $\frac{1}{n}\sum_{1 \leq j \leq m} \sum_{i \in \lambda_j} (R_i - \bar R_j)^2$ is proportional to $n\cdot \var[\hat{\Delta}_n]$ we immediately obtain from Assumption \eqref{eq:pop_nondegen} that $\frac{1}{n}\sum_{1 \leq j \leq m} \sum_{i \in \lambda_j} (R_i - \bar R_j)^2$ is bounded away from zero. We thus obtain \eqref{eq:highlevel} and so \eqref{eq:normal} follows.
To complete the proof, we apply Lemma \ref{lem:ratio} with $X_n = \sqrt n(\hat \Delta_n - \Delta_n)$, $\sigma_n^2 = n \var[\hat \Delta_n]$, and $\hat \sigma_n^2 = n \hat V_n$. \qed

\subsection{Proof of Theorem \ref{thm:main}}
First we establish (a). By direct calculation using Assumption \ref{ass:dn} and Theorem 2.1 in \cite{cochran1977sampling}, we have
\begin{align*}
    & E[\hat \tau_n^2 - \hat \kappa_n] \\
    & = \frac{1}{n \ell} \sum_{1 \leq i \leq n} Y_i(1)^2 + \frac{2 (\ell - 1) }{n (k - 1) \ell} \sum_{1 \leq j \leq m} \sum_{i < i' \in \lambda_j} Y_i(1) Y_{i'}(1) \\
    & \hspace{3em} + \frac{1}{n (k - \ell)} \sum_{1 \leq i \leq n} Y_i(0)^2 + \frac{2 (k - \ell - 1) }{n (k - 1) (k - \ell)} \sum_{1 \leq j \leq m} \sum_{i < i' \in \lambda_j} Y_i(0) Y_{i'}(0) \\
    & \hspace{3em} - \frac{2}{n (k - 1)} \sum_{1 \leq j \leq m} \sum_{i \neq i' \in \lambda_j} Y_i(1) Y_{i'}(0) \\
    & \hspace{3em} - \frac{2}{m} \sum_{1 \leq j \leq \frac{m}{2}} (\bar Y_{2j - 1, n}(1) - \bar Y_{2j - 1, n}(0)) (\bar Y_{2j, n}(1) - \bar Y_{2j, n}(0))~.
\end{align*}
On the other hand,
\begin{align*}
    m\cdot\var[\hat{\Delta}_n] & = \frac{1}{n} \sum_{1 \leq j \leq m} \bigg ( \frac{k}{(k - 1) \ell} \sum_{i \in \lambda_j} (Y_i(1) - \bar Y_{j, n}(1))^2 + \frac{k}{(k - 1) (k - \ell)} \sum_{i \in \lambda_j} (Y_i(0) - \bar Y_{j, n}(0))^2 \\
    & \hspace{6em} - \frac{1}{k - 1} \sum_{i \in \lambda_j} (Y_i(1) - Y_i(0) - (\bar Y_{j, n}(1) - \bar Y_{j, n}(0)) )^2 \bigg ) \\
    & = \frac{1}{n} \sum_{1 \leq j \leq m} \bigg ( \frac{1}{\ell} \sum_{i \in \lambda_j} Y_i(1)^2 - \frac{2}{(k - 1) \ell} \sum_{i < i' \in \lambda_j} Y_i(1) Y_{i'}(1) \\
    & \hspace{6em} + \frac{1}{k - \ell} \sum_{i \in \lambda_j} Y_i(0)^2 - \frac{2}{(k - 1) (k - \ell)} \sum_{i < i' \in \lambda_j} Y_i(0) Y_{i'}(0) \\
    & \hspace{6em} - \frac{1}{k - 1} \Big ( \sum_{i \in \lambda_j} (Y_i(1) - Y_i(0))^2 - k (\bar Y_{j, n}(1) - \bar Y_{j, n}(0))^2 \Big ) \bigg )~.
\end{align*}
We therefore have
\begin{align*}
    & E[\hat \tau_n^2 - \hat \kappa_n] - m\cdot\var[\hat{\Delta}_n]\\
    & = \frac{2}{n (k - 1)} \sum_{1 \leq j \leq m} \sum_{i < i' \in \lambda_j} Y_i(1) Y_{i'}(1) + \frac{2}{n (k - 1)} \sum_{1 \leq j \leq m} \sum_{i < i' \in \lambda_j} Y_i(0) Y_{i'}(0) \\
    & \hspace{3em} - \sum_{1 \leq j \leq m} \frac{2}{n (k - 1)} \sum_{i \neq i' \in \lambda_j} Y_i(1) Y_{i'}(0) \\
    & \hspace{3em} + \frac{1}{n (k - 1)} \sum_{1 \leq j \leq m} \Big ( \sum_{i \in \lambda_j} (Y_i(1) - Y_i(0))^2 - k (\bar Y_{j, n}(1) - \bar Y_{j, n}(0))^2 \Big ) \\
    & \hspace{3em} - \frac{2}{m} \sum_{1 \leq j \leq \frac{m}{2}} (\bar Y_{2j - 1, n}(1) - \bar Y_{2j - 1, n}(0)) (\bar Y_{2j, n}(1) - \bar Y_{2j, n}(0)) \\
    & = \frac{1}{n (k - 1)} \sum_{1 \leq j \leq m} \Big ( k^2 \bar Y_{j, n}(1)^2 - \sum_{i \in \lambda_j} Y_i(1)^2 \Big ) + \frac{1}{n (k - 1)} \sum_{1 \leq j \leq m} \Big ( k^2 \bar Y_{j, n}(0)^2 - \sum_{i \in \lambda_j} Y_i(0)^2 \Big ) \\
    & \hspace{3em} - \frac{2}{n (k - 1)} \sum_{1 \leq j \leq m} \Big ( k^2 \bar Y_{j, n}(1) \bar Y_{j, n}(0) - \sum_{i \in \lambda_j} Y_i(1) Y_i(0) \Big ) \\
    & \hspace{3em} + \frac{1}{n (k - 1)} \sum_{1 \leq j \leq m} \Big ( \sum_{i \in \lambda_j} (Y_i(1) - Y_i(0))^2 - k (\bar Y_{j, n}(1) - \bar Y_{j, n}(0))^2 \Big ) \\
    & \hspace{3em} - \frac{2}{m} \sum_{1 \leq j \leq \frac{m}{2}} (\bar Y_{2j - 1, n}(1) - \bar Y_{2j - 1, n}(0)) (\bar Y_{2j, n}(1) - \bar Y_{2j, n}(0)) \\
    & = \frac{1}{n (k - 1)} \sum_{1 \leq i \leq n} (- Y_i(1)^2 - Y_i(0)^2 + 2 Y_i(1) Y_i(0) + Y_i(1)^2 + Y_i(0)^2 - 2 Y_i(1) Y_i(0)) \\
    & \hspace{3em} + \frac{1}{n (k - 1)} \sum_{1 \leq j \leq m} (k^2 \bar Y_{j, n}(1)^2 + k^2 \bar Y_{j, n}(0)^2 - 2 k^2 \bar Y_{j, n}(1) \bar Y_{j, n}(0) \\
    & \hspace{12em} - k \bar Y_{j, n}(1)^2 - k \bar Y_{j, n}(0)^2 + 2 k \bar Y_{j, n}(1) \bar Y_{j, n}(0) ) \\
    & \hspace{3em} - \frac{2}{m} \sum_{1 \leq j \leq \frac{m}{2}} (\bar Y_{2j - 1, n}(1) - \bar Y_{2j - 1, n}(0)) (\bar Y_{2j, n}(1) - \bar Y_{2j, n}(0)) \\
    & = \frac{1}{m} \sum_{1 \leq j \leq m} (\bar Y_{j, n}(1) - \bar Y_{j, n}(0))^2 - \frac{2}{m} \sum_{1 \leq j \leq \frac{m}{2}} (\bar Y_{2j - 1, n}(1) - \bar Y_{2j - 1, n}(0)) (\bar Y_{2j, n}(1) - \bar Y_{2j, n}(0)) \\
    & = \frac{1}{m} \sum_{1 \leq j \leq \frac{m}{2}} (\bar Y_{2j - 1, n}(1) - \bar Y_{2j - 1, n}(0) - (\bar Y_{2j, n}(1) - \bar Y_{2j, n}(0)))^2~.
\end{align*}
Next, we establish (b). We'll show that
\begin{align*}
|\hat{\tau}^2_n - E[\hat{\tau}^2_n]| & \xrightarrow{P} 0 \\
|\hat{\kappa}_n - E[\hat{\kappa}_n]| & \xrightarrow{P} 0~,
\end{align*}
from which the result will follow from the triangle inequality. 
By Chebyshev's inequality, it suffices to show that $\var[\hat{\tau}^2_n] \rightarrow 0$ and $\var[\hat{\kappa}_n] \rightarrow 0$. To that end, define
\[\hat{\Delta}_j = \frac{1}{\ell}\sum_{i \in \lambda_j}Y_iD_i - \frac{1}{k - \ell}\sum_{i \in \lambda_j}Y_i(1 - D_i)~.\]
Then,
\begin{align}
\label{eq:tau_var} \var[\hat{\tau}_n^2] & = \frac{1}{m^2}\sum_{1 \leq j \leq m}\var[\hat{\Delta}_{j, n}^2] \\
\label{eq:kappa_var} \var[\hat{\kappa}_n] & = \frac{4}{m^2}\sum_{1 \le j \le \frac{m}{2}}\var[\hat{\Delta}_{2j-1, n}\hat{\Delta}_{2j, n}]~.
\end{align}
First, we bound \eqref{eq:tau_var}:
\begin{align*}
\frac{1}{m^2}\sum_{1 \leq j \leq m}\var[\hat{\Delta}_j^2] &\le \frac{1}{m^2}\sum_{1 \leq j \leq m}E[\hat{\Delta}_j^4] \\
&\lesssim \frac{1}{m^2}\sum_{1 \leq j \leq m}\left(E\left[\left(\frac{1}{\ell}\sum_{i \in \lambda_j}Y_iD_i\right)^4\right] + E\left[\left(\frac{1}{k - \ell}\sum_{i \in \lambda_j}Y_i(1 - D_i)\right)^4\right]\right) \\
&\le \frac{1}{m^2}\sum_{1 \leq j \leq m}\left(E\left[\frac{1}{\ell}\sum_{i \in \lambda_j}Y_i^4D_i\right] + E\left[\frac{1}{k - \ell}\sum_{i \in \lambda_j}Y_i^4(1 - D_i)\right]\right) \\
& = \frac{1}{m^2}\sum_{1 \leq j \leq m}\left(\frac{1}{k}\sum_{i \in \lambda_j}Y_i(1)^4 + \frac{1}{k}\sum_{i \in \lambda_j}Y_i(0)^4\right) \\ 
& = \frac{1}{m}\left(\frac{1}{n}\sum_{1 \leq i \leq n}Y_i(1)^4 + \frac{1}{n}\sum_{1 \leq i \leq n}Y_i(0)^4\right) \rightarrow 0~,
\end{align*}
where the first inequality follows from the definition of the variance, the second from repeated applications of the inequality $(a \pm b)^2 \le 2(a^2 + b^2)$, the third from Jensen's inequality, the first equality from Assumption \ref{ass:dn} and Theorem 2.1 in \cite{cochran1977sampling}, which together imply $E[D_i] = \ell / k$, and the final line by \eqref{eq:pop_moments}.

Next, we bound \eqref{eq:kappa_var}:
\begin{align*}
 \frac{4}{m^2}\sum_{1 \le j \le \frac{m}{2}}\var[\hat{\Delta}_{2j-1,n}\hat{\Delta}_{2j,n}] &\le \frac{2}{m^2}\sum_{1 \le j \le \frac{m}{2}}E[2\hat{\Delta}_{2j-1,n}^2\hat{\Delta}_{2j,n}^2] \\
 &\le \frac{2}{m^2}\sum_{1 \le j \le \frac{m}{2}}E[\hat{\Delta}_{2j-1,n}^4 + \hat{\Delta}_{2j,n}^4] \\ 
 & \lesssim \frac{2}{m^2}\sum_{1 \le j \le \frac{m}{2}}\left(\frac{1}{k}\sum_{i \in \lambda_{2j - 1}}\sum_{d \in \{0,1\}}Y_i(d)^4 + \frac{1}{k}\sum_{i \in \lambda_{2j}}\sum_{d \in \{0, 1\}}Y_i(d)^4\right) \\ 
 & = \frac{2}{m}\left(\frac{1}{n}\sum_{1 \leq i \leq n}Y_i(1)^4 + \frac{1}{n}\sum_{1 \leq i \leq n}Y_i(0)^4\right) \rightarrow 0~.
\end{align*}
where the first inequality follows from the definition of the variance, the second from the inequality $2|ab| \le (a^2 + b^2)$, the third from following the same argument as when bounding \eqref{eq:tau_var}, and the final line by Assumption \eqref{eq:pop_moments}. \qed

\subsection{Proof of Theorem \ref{thm:limit_V}}

From the proof of Theorem \ref{thm:normal}, we obtain that
\[ n \cdot \var[\hat{\Delta}_n] = k \sum_{1 \le j \le m}\var[A_j] = k \eta^2\left(\frac{1}{\ell} - \frac{1}{k}\right) \frac{1}{k-1} \frac{1}{m} \sum_{1 \le j \le m} \sum_{i \in \lambda_j}(R_i - \bar{R}_j)^2 \]
Consider the following expansion of $\frac{1}{m}\sum_{1 \le j \le m} \sum_{i \in \lambda_j}\bar{R}_j^2$:
\begin{align*}
    \frac{1}{m}\sum_{1 \le j \le n} \sum_{i \in \lambda_j}\bar{R}_j^2 = \frac{k}{m} \sum_{1 \le j \le m} \bar{R}_j^2 
    & = \frac{k}{m k^2} \sum_{1 \le j \le m} \sum_{i, i' \in \lambda_j} R_i R_{i'} \\
    & = \frac{1}{km} \sum_{1 \leq i \leq km} R_i^2 + \frac{2}{km} \sum_{1 \le j \le m} \sum_{i < i' \in \lambda_j} R_i R_{i'} \\
    & = \frac{1}{n} \sum_{1 \le i \le n} R_i^2 + \frac{k-1}{m} \sum_{1 \le j \le m} \frac{1}{\binom{k}{2}} \sum_{i < i' \in \lambda_j} R_i R_{i'}~.
\end{align*}
We now show that
\begin{align}
\label{eq:R1} \frac{1}{n} \sum_{1 \leq i \leq n} R_i^2 & \to E_Q[\tilde R^2] \\
\label{eq:R2} \frac{1}{m} \sum_{1 \le j \le m} \frac{1}{\binom{k}{2}} \sum_{i < i' \in \lambda_j} R_i R_{i'} & \to E_Q[E_Q[\tilde R | \tilde X]^2]~,
\end{align}
where $\tilde{R}:= \frac{\tilde{Y}(1)}{\eta} + \frac{\tilde{Y}(0)}{1 - \eta}$. Note \eqref{eq:R1} follows from Assumption \ref{ass:super}(a) immediately. To see how \eqref{eq:R2} is implied by Assumption \ref{ass:super}(b), for each $j$, let $i_{j, 1} < \dots < i_{j, k}$ be such that $\lambda_j = \{i_{j, 1}, \dots, i_{j, k}\}$, and note Assumption \ref{ass:super}(b) implies that for each $1 \leq \gamma_1 < \gamma_2 \leq k$,
\[ \frac{1}{m} \sum_{1 \leq j \leq m} R_{i_{j, \gamma_1}} R_{i_{j, \gamma_2}} \to E_Q[E_Q[\tilde R | \tilde X]^2]~. \]
The result in \eqref{eq:R2} then follows upon noting
\[ \frac{1}{m} \sum_{1 \leq j \leq m} \frac{1}{\binom{k}{2}} \sum_{i < i' \in \lambda_j} R_i R_{i'} = \frac{1}{\binom{k}{2}} \sum_{1 \leq \gamma_1 < \gamma_2 \le k} \frac{1}{m} \sum_{1 \leq j \leq m} R_{i_{j, \gamma_1}} R_{i_{j, \gamma_2}}~. \]
We thus obtain that
\[\frac{1}{m}\sum_{1 \le j \le m}\sum_{i \in \lambda_j}(R_i - \bar{R}_j)^2 = \frac{1}{m}\sum_{1 \le j \le m} \Big ( \sum_{i \in \lambda_j} R_i^2 - k \bar{R}_j^2 \Big ) \to (k-1) E_Q[\var_Q[\tilde{R} | \tilde{X}]]~.\]
Finally,
\begin{align*}
     n \cdot \var[\hat{\Delta}_n] = k \sum_{1 \le j \le m}\var[A_j] & \to k\eta^2\left(\frac{1}{\ell}- \frac{1}{k}\right)E_Q[\var_Q[\tilde{R}|\tilde{X}]]  \\ 
     & =\eta(1-\eta)E_Q[\var_Q[\tilde{R}|\tilde{X}]] = \eta(1-\eta)E_Q\left[\var_Q\left[\frac{\tilde{Y}(1)}{\eta} + \frac{\tilde{Y}(0)}{1- \eta}\Bigg|\tilde{X}\right]\right] \\
     & =  E_Q\left[\frac{\var_Q[\tilde{Y}(1)|\tilde{X}]}{\eta} + \frac{\var_Q[\tilde{Y}(0)|\tilde{X}]}{1- \eta}\right] - E_Q[\var_Q[\tilde{Y}(1) - \tilde{Y}(0) | \tilde{X}]]~,
\end{align*}
as desired. \qed

\subsection{Proof of Theorem \ref{thm:main_var}}

(a) We start by deriving the limit of $n E[\hat V_n]$. Recall from the proof of Theorem \ref{thm:main} that
\begin{align*}
     E[\hat \tau_n^2]
    & = \frac{1}{n \ell} \sum_{1 \leq i \leq n} Y_i(1)^2 + \frac{2 (\ell - 1) }{n (k - 1) \ell} \sum_{1 \leq j \leq m} \sum_{i < i' \in \lambda_j} Y_i(1) Y_{i'}(1) \\
    & \hspace{3em} + \frac{1}{n (k - \ell)} \sum_{1 \leq i \leq n} Y_i(0)^2 + \frac{2 (k - \ell - 1) }{n (k - 1) (k - \ell)} \sum_{1 \leq j \leq m} \sum_{i < i' \in \lambda_j} Y_i(0) Y_{i'}(0) \\
    & \hspace{3em} - \frac{2}{n (k - 1)} \sum_{1 \leq j \leq m} \sum_{i \neq i' \in \lambda_j} Y_i(1) Y_{i'}(0) \\
   & = \frac{1}{m} \sum_{1 \leq j \leq m} \bigg ( \frac{1}{\ell} \frac{1}{k} \sum_{i \in \lambda_j} Y_i^2(1) + \frac{\ell - 1}{\ell} \frac{1}{\binom{k}{2}} \sum_{i < i' \in \lambda_j} Y_i(1) Y_{i'}(1) \bigg ) \\
& \hspace{2em} + \frac{1}{m} \sum_{1 \leq j \leq m} \bigg ( \frac{1}{k - \ell} \frac{1}{k} \sum_{i \in \lambda_j} Y_i^2(0) + \frac{k - \ell - 1}{k - \ell} \frac{1}{\binom{k}{2}} \sum_{i < i' \in \lambda_j} Y_i(0) Y_{i'}(0) \bigg ) \\
& \hspace{2em} - \frac{1}{m} \sum_{1 \leq j \leq m} \frac{1}{\binom{k}{2}} \sum_{i \neq i' \in \lambda_j} Y_i(1) Y_{i'}(0)~.
\end{align*}
Assumption \ref{ass:super} implies that
\begin{align}
\label{eq:tau1} \frac{1}{n} \sum_{1 \leq i \leq n} Y_i(d)^2 & \to E_Q[\tilde Y^2(d)] \text{ for } d \in \{0, 1\} \\
\label{eq:tau2} \frac{1}{m} \sum_{1 \leq j \leq m} \frac{1}{\binom{k}{2}} \sum_{i < i' \in \lambda_j} Y_i(d) Y_{i'}(d) & \to E_Q[\tilde E_Q[\tilde Y(d) | \tilde X]^2] \text{ for } d \in \{0, 1\} \\
\label{eq:tau3} \frac{1}{m} \sum_{1 \leq j \leq m} \frac{1}{\binom{k}{2}} \sum_{i \neq i' \in \lambda_j} Y_i(1) Y_{i'}(0) & \to 2 E_Q[E_Q[\tilde Y(1) | \tilde X] E_Q[\tilde Y(0) | \tilde X]]~,
\end{align}
where \eqref{eq:tau1} follows from Assumption \ref{ass:super}(a) and \eqref{eq:tau2}--\eqref{eq:tau3} follow from Assumption \ref{ass:super}(b). To see how \eqref{eq:tau2} is implied by Assumption \ref{ass:super}(b), for each $j$, let $i_{j, 1} < \dots < i_{j, k}$ be such that $\lambda_j = \{i_{j, 1}, \dots, i_{j, k}\}$, and note Assumption \ref{ass:super}(b) implies that for each $1 \leq \gamma_1 < \gamma_2 \leq k$,
\[ \frac{1}{m} \sum_{1 \leq j \leq m} Y_{i_{j, \gamma_1}} Y_{i_{j, \gamma_2}} \to E_Q[ E_Q[\tilde Y(d) | \tilde X]^2]~. \]
\eqref{eq:tau2} then follows by noting
\[ \frac{1}{m} \sum_{1 \leq j \leq m} \frac{1}{\binom{k}{2}} \sum_{i < i' \in \lambda_j} Y_i(d) Y_{i'}(d) = \frac{1}{\binom{k}{2}} \sum_{1 \leq \gamma_1 < \gamma_2 \le k} \frac{1}{m} \sum_{1 \leq j \leq m} Y_{i_{j, \gamma_1}} Y_{i_{j, \gamma_2}}~. \]
The result in \eqref{eq:tau3} can be established similarly. Therefore, noting $n = m k$, we get
\begin{equation} \label{eq:tau-limit}
E[\hat \tau_n^2] \to \frac{1}{\ell} E_Q[\var_Q[\tilde Y(1) | \tilde X]] + \frac{1}{k - \ell} E_Q[\var_Q[\tilde Y(0) | \tilde X]] + E_Q[E_Q[\tilde Y(1) - \tilde Y(0) | \tilde X]^2]~.
\end{equation}
Next, recall from the proof of Theorem \ref{thm:main} that
\[E[\hat \kappa_n] = \frac{2}{m} \sum_{1 \leq j \leq \frac{m}{2}} (\bar Y_{2j - 1, n}(1) - \bar Y_{2j - 1, n}(0)) (\bar Y_{2j, n}(1) - \bar Y_{2j, n}(0))~.\]
It follows from Assumption \ref{ass:super}(b) applied across strata and similar arguments to those used to prove \eqref{eq:tau2}--\eqref{eq:tau3} that
\[ E[\hat \kappa_n] \to E_Q[E_Q[\tilde Y(1) - \tilde Y(0) | \tilde X]^2]~. \]
As a result,
\[ n E[\hat V_n] = k (E[\hat \tau_n^2] - E[\hat \kappa_n]) \to E_Q\left[\frac{\var_Q[\tilde{Y}(1)|\tilde{X}]}{\eta} + \frac{\var_Q[\tilde{Y}(0)|\tilde{X}]}{1- \eta}\right]~. \]

\noindent (b) For $n E[\hat{V}_n^{\rm IM}]$, first observe that because $\hat \Delta_n = \frac{1}{m} \sum_{1 \leq j \leq m} \hat \Delta_{j, n}$,
\begin{equation} \label{eq:im-tau}
\begin{split}
n \hat{V}_n^{\rm IM} & = k \frac{m}{m - 1} (m - 1) \hat{V}_n^{\rm IM} \\
& = k \frac{m}{m - 1} \bigg ( \frac{1}{m}\sum_{j=1}^m \hat{\Delta}_{j,n}^2-\hat{\Delta}_n^2 \bigg ) \\
& = k \hat \tau_n^2 - k \frac{m}{m - 1} \frac{2}{m^2} \sum_{1 \leq j < j' \leq m} \hat \Delta_{j, n} \hat \Delta_{j', n}~.
\end{split}
\end{equation}
It therefore suffices to derive the limit of
\[ E \bigg [ \frac{2}{m^2} \sum_{1 \leq j < j' \leq m} \hat \Delta_{j, n} \hat \Delta_{j', n} \bigg ] \]
and combine with the limit of $E[\hat \tau_n^2]$ in \eqref{eq:tau-limit}. To that end, note
\begin{align}
\nonumber & E \bigg [ \frac{2}{m^2} \sum_{1 \leq j < j' \leq m} \hat \Delta_{j, n} \hat \Delta_{j', n} \bigg ] \\
\nonumber & = \frac{2}{m^2} \sum_{1 \leq j < j' \leq m} \Delta_{j, n} \Delta_{j', n} \\
\nonumber & = \bigg ( \frac{1}{m} \sum_{1 \leq j \leq m} \Delta_{j, n} \bigg )^2 - \frac{1}{m^2} \sum_{1 \leq j \leq n} \Delta_{j, n}^2 \\
\nonumber & = \Delta_n^2 - \frac{1}{m^2} \sum_{1 \leq j \leq n} \frac{1}{k^2} \bigg ( \sum_{i \in \lambda_j} (Y_i(1) - Y_i(0))^2 + 2 \sum_{i < i' \in \lambda_j} (Y_i(1) - Y_i(0)) (Y_{i'}(1) - Y_{i'}(0)) \bigg ) \\
\label{eq:im-cross} & \to E_Q[\tilde Y(1) - \tilde Y(0)]^2~,
\end{align}
where the first equality follows because $\hat \Delta_{j, n}$ and $\hat \Delta_{j', n}$ are independent and $E[\hat \Delta_{j, n}] = \Delta_{j, n}$, the second follows by completing the square, the third follows by expanding the square, and the convergence follows because $\Delta_n \to E_Q[\tilde Y(1) - \tilde Y(0)]$, $\frac{1}{m} \to 0$, and
\begin{align*}
\frac{1}{n} \sum_{1 \leq i \leq n} (Y_i(1) - Y_i(0))^2 & \to E_Q[(\tilde Y(1) - \tilde Y(0))^2] \\
\frac{1}{m} \sum_{1 \leq j \leq n} \frac{1}{\binom{k}{2}} \sum_{i < i' \in \lambda_j} (Y_i(1) - Y_i(0)) (Y_{i'}(1) - Y_{i'}(0)) & \to E_Q[E_Q[\tilde Y(1) - \tilde Y(0) | \tilde X]^2]  
\end{align*}
by similar arguments to those used to prove \eqref{eq:tau1}--\eqref{eq:tau2}. Combining \eqref{eq:tau-limit}, \eqref{eq:im-tau}, and \eqref{eq:im-cross}, noting $\frac{m}{m - 1} \to 1$, we have
\begin{align*}
n \hat E[V_n^{\rm IM}] & \to \frac{1}{\eta} E_Q[\var_Q[\tilde Y(1) | \tilde X]] + \frac{1}{1 - \eta} E_Q[\var_Q[\tilde Y(0) | \tilde X]] \\
& \hspace{3em} + k E_Q[E_Q[\tilde Y(1) - \tilde Y(0) | \tilde X]^2] - k E_Q[\tilde Y(1) - \tilde Y(0)]^2 \\
& = V^{\rm obs} + k \var_Q[E_Q[\tilde Y(1) - \tilde Y(0) | X]]~,
\end{align*}
as desired.

\noindent (c) Finally, we study the limit of $n E[V_n^{\rm F}]$. We first show that the diagonal entries of the projection matrix $H_{Q_2}$ converge to $0$. Let $\Sigma_{X, n}= \frac{1}{m} \sum_{1 \le j \le m}(\bar{X}_{j,n}-\mu_{X,n})(\bar{X}_{j,n}-\mu_{X,n})'$, and we have
\begingroup
\allowdisplaybreaks
\begin{align*}
H_{Q_2}&=Q_2(Q_2'Q_2)^{-1}Q_2' \\
&=
\begin{pmatrix}
     & (\bar X_{1, n} - \mu_{X, n})' \\
    \iota_m & \vdots \\
     & (\bar X_{m, n} - \mu_{X, n})'
\end{pmatrix}
\begin{pmatrix}
m & 0_p' \\
0_p & m \Sigma_{X, n}
\end{pmatrix}^{-1}
\begin{pmatrix}
 & \iota_m' & \\
\bar X_{1, n} - \mu_{X, n} & \cdots & \bar X_{m, n} - \mu_{X, n}
\end{pmatrix}
\\
&=
\frac{1}{m}\iota_m\iota_m'
+
\frac{1}{m} \begin{pmatrix}
(\bar X_{1, n} - \mu_{X, n})' \\
\vdots \\
(\bar X_{m, n} - \mu_{X, n})'
\end{pmatrix}
\Sigma_{X, n}^{-1}
\begin{pmatrix}
\bar X_{1, n} - \mu_{X, n} & \cdots & \bar X_{m, n} - \mu_{X, n}
\end{pmatrix}~.
\end{align*}
\endgroup
Therefore, denoting the operator norm by $\|\cdot\|_{\rm op}$, we have that for $1 \le j \le m$, the $j$-th diagonal entry of $H_{Q_2}$ may be uniformly bounded as
\begin{align*}
\max_{1 \leq j \leq m} [H_{Q_2}]_{jj}& = \max_{1 \leq j \leq m} \left ( \frac{1}{m}+\frac{1}{m}(\bar X_{j, n} - \mu_{X, n})' \Sigma_{X, n}^{-1}(\bar X_{j, n} - \mu_{X, n}) \right )
\\
&\le
\frac{1}{m} + \frac{1}{m}\left\lVert \Sigma_{X, n}^{-1} \right\rVert_{\rm op} \max_{1 \leq j \leq m} \lVert \bar X_{j, n} - \mu_{X, n} \rVert^2
\\
&\lesssim
\frac{1}{m}+ k \left\lVert \Sigma_{X, n}^{-1}\right\rVert_{\rm op} \frac{1}{n}\max_{1 \le i \le n} \lVert X_i \rVert^2 \to 0,
\end{align*}
where the first inequality follows from the definition of the operator norm, the second inequality follows from repeated application of the inequality that $(a - b)^2 \leq 2(a^2 + b^2)$, and the convergence follows from Assumption \ref{ass:X}(a) and that $\Sigma_{X, n} \to \var_Q[\tilde X]$, which further follows from Assumption \ref{ass:X}(b)--(c) by similar arguments to those used to prove \eqref{eq:tau1}--\eqref{eq:tau2}.

Define
\[Q_{2, X} = \begin{pmatrix}
     (\bar X_{1, n} - \mu_{X, n})' \\
     \vdots \\
     (\bar X_{m, n} - \mu_{X, n})'
\end{pmatrix}~. \]
Let $H_{\iota_m}$ denote the projection matrix for $\iota_m$ and $H_{Q_{2, X}}$ denote the projection matrix for $Q_{2, X}$. Note that by construction $\iota_m' Q_{2, X} = 0$. It follows from Lemma \ref{lem:quadratic} that
\begin{align}
\nonumber n E[\hat{V}_n^{\rm F}] &= \frac{k}{m} E[\hat{\delta}_n'(\diag(I-H_{Q_2}))^{-1/2}(I-H_{Q_2})(\diag(I-H_{Q_2}))^{-1/2}\hat{\delta}_n]
\\
\label{eq:f1} & = \frac{k}{m} \delta_n' (\diag(I-H_{Q_2}))^{-1/2}(I-H_{Q_2})(\diag(I-H_{Q_2}))^{-1/2} \delta_n \\
\label{eq:f2} & \hspace{3em} + \frac{k}{m} \tr((I-H_{Q_2}) \var[\diag(I-H_{Q_2}))^{-1/2} \hat \delta_n])
\end{align}
Define $\epsilon_n = (\diag(I-H_{Q_2}))^{-1/2} \delta_n - \delta_n$. Note \eqref{eq:f1} equals
\begin{equation} \label{eq:f1-expanded}
\frac{k}{m} \delta_n' (I - H_{Q_2}) \delta_n + \frac{k}{m} \epsilon_n' (I - H_{Q_2}) \delta_n + \frac{k}{m} \delta_n' (I - H_{Q_2}) \epsilon_n + \frac{k}{m} \epsilon_n' (I - H_{Q_2}) \epsilon_n~.    
\end{equation}
Note $\frac{1}{m} \|\delta_n\|^2$ converges to a constant by the derivations in the proof for the limit of $n E[\hat V_n^{\rm IM}]$. Because $\max_{1 \leq j \leq m} [H_{Q_2}]_{jj} \to 0$ as $n \to \infty$,
\[ \frac{1}{m} \|\epsilon_n\|^2 \leq \frac{1}{m} \|\delta_n\|^2 \max_{1 \leq j \leq m} \left ( \frac{1}{\sqrt{1 - [H_{Q_2}]_{jj}}} - 1 \right )^2 \to 0~. \]
Because $I - H_{Q_2}$ is a projection matrix, all of its eigenvalues are either 0 or 1, and hence its operator norm is 1. Therefore,
\[ \frac{k}{m} \epsilon_n' (I - H_{Q_2}) \delta_n \leq k \left ( \frac{1}{m}\|\epsilon_n\|^2 \right )^{1/2} \|I - H_{Q_2}\|_{\rm op} \left ( \frac{1}{m} \|\delta_n\| \right )^{1/2} \to 0~. \]
Similarly, the last two terms in \eqref{eq:f1-expanded} also converge to 0 as $n \to \infty$. Recalling $\iota_m' Q_{2, X} = 0$, we get that \eqref{eq:f1} equals
\begin{equation} \label{eq:muAmu}
\frac{k}{m} \delta_n' (I - H_{Q_2}) \delta_n = \frac{k}{m} \delta_n' (I-H_{\iota_m}-H_{Q_{2, X}}) \delta_n = \frac{k}{m} \delta_n' (I-H_{\iota_m}) \delta_n - \frac{k}{m} \delta_n' Q_{2, X} \Sigma_{X, n}^{-1} \frac{1}{m} Q_{2, X}' \delta_n~.
\end{equation}
Next, because $\hat \Delta_{j, n}$ are independent across $1 \leq j \leq m$, note \eqref{eq:f2} equals
\begin{align}
\nonumber & \frac{k}{m} \tr \left ( (I-H_{Q_2}) \diag \bigg ( \frac{\var[\hat \Delta_{j, n}]}{1 - [H_{Q_2}]_{jj}}: 1 \leq j \leq m \bigg ) \right ) \\
\nonumber & = \frac{k}{m} \sum_{1 \leq j \leq m} (1 - [H_{Q_2}]_{jj}) \frac{\var[\hat \Delta_{j, n}]}{1 - [H_{Q_2}]_{jj}} \\
\label{eq:Avar} & = \frac{k}{m} \sum_{1 \leq j \leq m} \var[\hat \Delta_{j, n}]~.
\end{align}
Moreover, this relationship holds with $Q_2$ replaced by $\iota_m$. Therefore, the \eqref{eq:muAmu}--\eqref{eq:Avar} imply
\begin{align*}
& \nonumber n E[\hat{V}_n^{\rm F}] \\
& = \frac{k}{m} \delta_n' (I-H_{\iota_m}) \delta_n - \frac{k}{m} \delta_n' Q_{2, X} \Sigma_{X, n}^{-1} \frac{1}{m} Q_{2, X}' \delta_n + \frac{k}{m} \tr((I - H_{\iota_m}) \var[(\diag(I - H_{\iota_m}))^{-1/2} \hat \delta_n]) \\
& = n E[\hat V_n^{\rm IM}] - \frac{k}{m} \delta_n' Q_{2, X} \Sigma_{X, n}^{-1} \frac{1}{m} Q_{2, X}' \delta_n
\end{align*}
Next, note $\frac{1}{m} Q_{2, X}' \delta_n \to \cov_Q[\tilde X,\tilde Y(1)-\tilde Y(0)]$ under Assumption \ref{ass:X}(a)--(c) by similar arguments to those used to prove \eqref{eq:tau1}--\eqref{eq:tau2}. Further note $\frac{m - 1}{m} \to 1$ and recall $\Sigma_{X, n} \to \var_Q[\tilde X]$ and the limit of $nE[\hat V_n^{\rm IM}]$ derived above, and we have
\begin{align*}
n E[\hat{V}_n^{\rm F}] &\to
V^{\rm obs}
+
k\var_Q[ E_Q[ \tilde{Y}(1)-\tilde{Y}(0)|\tilde{X} ] ]
-
k\cov_Q[\tilde X,\tilde Y(1)-\tilde Y(0)]\var_Q[\tilde X]^{-1}\cov_Q[\tilde X,\tilde Y(1)-\tilde Y(0)]
\\
&=
V^{\rm obs}
+
k\var_Q[ E_Q[ \tilde{Y}(1)-\tilde{Y}(0)|\tilde{X} ] ]
-
k\var_Q[\text{BLP}_Q(\tilde Y(1)-\tilde Y(0)|1,\tilde X)]
\\
&=
V^{\rm obs}
+
k\var_Q[E_Q[ \tilde{Y}(1)-\tilde{Y}(0)|\tilde{X} ]-\text{BLP}_Q(\tilde Y(1)-\tilde Y(0)|1,\tilde X)]~,
\end{align*}
where the first equality follows from Lemma \ref{lem:blp}(b) and the second follows from Lemma \ref{lem:blp}(d). 

Finally, the statement about the coverages of confidence intervals follow from Theorem \ref{thm:normal} (which applies because of \eqref{eq:pop_moments}--\eqref{eq:pop_nondegen}), \eqref{eq:consistent}, and Slutsky's theorem. \qed

\subsection{Auxiliary Results}

Here we state and prove a standard central limit theorem for design-based inference \citep[see for instance][]{hajek1960limiting,li2017general}.

\begin{theorem}\label{thm:normal}
Suppose that Assumption \ref{ass:dn} holds and that
\begin{equation} \label{eq:highlevel}
\frac{\max_{1 \leq j \leq m} \max_{i \in \lambda_j} (R_i - \bar R_j)^2}{\sum_{1 \leq j \leq m} \sum_{i \in \lambda_j} (R_i - \bar R_j)^2} \rightarrow 0
\end{equation}
as $n \rightarrow \infty$, where $R_i := \frac{Y_i(1)}{\eta} + \frac{Y_i(0)}{1 - \eta}$ and $\bar R_j := \frac{1}{k} \sum_{i \in \lambda_j} R_i$.  Then, $$\frac{\hat \Delta_n - \Delta_n}{\var[\hat \Delta_n]^{1/2}} \stackrel{d}{\rightarrow} N(0,1)~.$$
\end{theorem}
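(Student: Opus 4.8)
The plan is to exploit the fact that treatment is assigned independently across the $m$ strata while $k$ stays fixed, so that $\hat\Delta_n-\Delta_n$ is a normalized sum of $m$ independent, mean-zero, \emph{bounded} stratum-level contributions. Asymptotic normality should then come from a Lindeberg--Feller CLT applied \emph{across} strata, rather than from a within-stratum H\'ajek-type argument (each stratum contributes only a fixed number $k$ of terms).

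First I would establish the per-stratum decomposition. Writing $\hat\Delta_n=\frac1m\sum_{j}\hat\Delta_{j,n}$ and $\Delta_n=\frac1m\sum_j\Delta_{j,n}$ and using $D_i\in\{0,1\}$ together with $\sum_{i\in\lambda_j}D_i=\ell$, a direct manipulation yields the identity
\[
\hat\Delta_{j,n}-\Delta_{j,n}=\frac1k\sum_{i\in\lambda_j}(D_i-\eta)(R_i-\bar R_j)~,
\]
where the combination $R_i=Y_i(1)/\eta+Y_i(0)/(1-\eta)$ is precisely what merges the treated and control pieces, and $\sum_{i\in\lambda_j}(D_i-\eta)=0$ permits centering $R_i$ at $\bar R_j$. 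Setting $\xi_{n,j}:=\hat\Delta_{j,n}-\Delta_{j,n}$, these are independent across $j$ with $E[\xi_{n,j}]=0$ under Assumption \ref{ass:dn}.

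Next I would compute the variance. Within stratum $j$ the assignment is simple random sampling without replacement, so $\var[D_i]=\eta(1-\eta)$ and $\cov[D_i,D_{i'}]=-\eta(1-\eta)/(k-1)$; combined with $\sum_{i\in\lambda_j}(R_i-\bar R_j)=0$ this gives $\var[\xi_{n,j}]=\frac{\ell(k-\ell)}{k^3(k-1)}\sum_{i\in\lambda_j}(R_i-\bar R_j)^2$, and hence
\[
n\cdot\var[\hat\Delta_n]=\frac{\ell(k-\ell)}{k^2(k-1)}\cdot\frac1m\sum_{1\le j\le m}\sum_{i\in\lambda_j}(R_i-\bar R_j)^2~,
\]
which is the expression quoted in the proof of Theorem \ref{thm:limit_V}. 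Writing $s_n^2:=\var[\hat\Delta_n]$, the normalized summands $\zeta_{n,j}:=\xi_{n,j}/(m s_n)$ form a triangular array of independent, mean-zero variables with $\sum_j\var[\zeta_{n,j}]=1$ and $\sum_j\zeta_{n,j}=(\hat\Delta_n-\Delta_n)/s_n$.

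Finally I would verify the Lindeberg condition, and here the hypothesis \eqref{eq:highlevel} enters directly. Since $|D_i-\eta|\le 1$, Cauchy--Schwarz gives the deterministic bound $|\xi_{n,j}|\le k^{-1/2}\big(\sum_{i\in\lambda_j}(R_i-\bar R_j)^2\big)^{1/2}\le M_n^{1/2}$, where $M_n:=\max_j\max_{i\in\lambda_j}(R_i-\bar R_j)^2$; dividing by $m s_n=c_{k,\ell}\,T_n^{1/2}$, where $T_n:=\sum_j\sum_{i\in\lambda_j}(R_i-\bar R_j)^2$ and $c_{k,\ell}>0$, yields $\max_j|\zeta_{n,j}|\le c_{k,\ell}^{-1}(M_n/T_n)^{1/2}\to 0$ by \eqref{eq:highlevel}. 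Because the summands are uniformly bounded by a deterministic null sequence while their variances sum to one, for every $\epsilon>0$ the truncated second moments $\sum_j E[\zeta_{n,j}^2\mathbf1\{|\zeta_{n,j}|>\epsilon\}]$ vanish for all large $n$, so the Lindeberg--Feller CLT gives $(\hat\Delta_n-\Delta_n)/s_n\xrightarrow{d}N(0,1)$. The main obstacle is the variance bookkeeping: getting the within-stratum without-replacement covariance correct and confirming that the chosen normalization turns \eqref{eq:highlevel} into exactly the Lindeberg ratio. Once the summands are seen to be deterministically bounded by $c_{k,\ell}^{-1}(M_n/T_n)^{1/2}$, the CLT itself is essentially immediate precisely because $k$ is fixed.
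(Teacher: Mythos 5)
Your proof is correct and follows essentially the same route as the paper's: the same stratum-level decomposition in terms of $R_i - \bar R_j$ (the paper centers $D_i$ implicitly via $\sum_{i \in \lambda_j}(R_i - \bar R_j) = 0$, you center it explicitly at $\eta$), the same without-replacement variance calculation (the paper cites Cochran's Theorems 2.1--2.2 where you derive $\var[D_i]$ and $\cov[D_i, D_{i'}]$ directly), and the same verification that \eqref{eq:highlevel} forces the Lindeberg sum to vanish identically for large $n$ because the normalized summands are deterministically bounded by $(M_n/T_n)^{1/2}$ up to a constant. Your variance constant $\ell(k-\ell)/(k^3(k-1))$ agrees with the paper's $\eta^2(\tfrac{1}{\ell}-\tfrac{1}{k})\tfrac{1}{k-1}$, so the bookkeeping is right.
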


\begin{proof}
Re-writing $\hat{\Delta}_n$ and $\Delta_n$ over the blocks $\{\lambda_j: 1 \le j \le n\}$, we obtain
\[\hat{\Delta}_n = \frac{1}{n}\left(\sum_{1 \leq j \leq m}\sum_{i \in \lambda_j}\left(\frac{Y_i(1)}{\eta} + \frac{Y_i(0)}{1 - \eta}\right)D_i - \sum_{1 \leq j \leq m}\sum_{i \in \lambda_j}\frac{Y_i(0)}{1 - \eta}\right)~,\]
and 
\[\Delta_n = \frac{1}{n}\left(\sum_{1 \leq j \leq m}\sum_{i \in \lambda_j}\left(\bar{R}_j(1) + \bar{R}_j(0)\right)D_i - \sum_{1 \leq j \leq m}\sum_{i \in \lambda_j}\frac{Y_i(0)}{1 - \eta}\right)~,\]
where $\bar{R}_j(1) = \frac{1}{k}\sum_{i \in \lambda_j}\frac{Y_i(1)}{\eta}$ and similarly for $\bar{R}_j(0)$. Putting both together,
\[\hat{\Delta}_n - \Delta_n = \frac{1}{n}\sum_{1 \leq j \leq m}\sum_{i \in \lambda_j}(R_i - \bar{R}_j)D_i~,\]
where $R_i = \frac{Y_i(1)}{\eta} + \frac{Y_i(0)}{1 - \eta}$ and $\bar{R}_j = \bar{R}_j(1) + \bar{R}_j(0)$. Hence we obtain that 
\[\sqrt{m}(\hat{\Delta}_n - \Delta_n) = \sum_{1 \leq j \leq m}A_j~,\]
where $A_j = \frac{1}{\sqrt{m}}\frac{\eta}{\ell}\sum_{i \in \lambda_j}(R_i - \bar{R}_j)D_i$.
Note that $\frac{1}{\ell}\sum_{i \in \lambda_j}(R_i - \bar{R}_j)D_i$ is the sample mean when sampling a subset of size $\ell$ without replacement from the finite population $\{(R_i - \bar{R}_j): i \in \lambda_j\}$. Accordingly, from the properties of the sample mean when sampling from a finite population (Theorems 2.1 and 2.2 in \cite{cochran1977sampling}), we obtain immediately that 
\[E[A_j] = 0~,\]
\[\var[A_j] = \frac{1}{m}\eta^2\left(\frac{1}{\ell} - \frac{1}{k}\right)\left(\frac{1}{k-1}\right)\sum_{i \in \lambda_j}(R_i - \bar{R}_j)^2~.\]
Moreover, by the definition of $A_j$ and Jensen's inequality, 
\[A_j^2 \lesssim \frac{1}{m}\max_{i \in \lambda_j}(R_i - \bar{R}_j)^2~.\]
To establish asymptotic normality, we verify the Lindeberg condition:
\[ \sum_{1 \leq j \leq m} E \left [ \frac{A_j^2}{s_n^2} I \left\{\frac{A_j^2}{s_n^2} > \epsilon^2\right\} \right ] \to 0~,\]
where $s_n^2 = \sum_{1 \leq j \leq m}\var[A_j]$. Note that by our previous calculations 
\[ s_n^2 \propto \frac{1}{m}\sum_{1 \leq j \leq m} \sum_{i \in \lambda_j} (R_i - \bar R_j)^2~,\]
hence 
\[\frac{A_j^2}{s_n^2} \lesssim \frac{\max_{i \in \lambda_j}(R_i - \bar{R}_j)^2}{\sum_{1 \leq j \leq m}\sum_{i \in \lambda_j}(R_i - \bar{R}_j)^2}~.\]
Therefore, if \eqref{eq:highlevel} holds, then for each fixed $\epsilon > 0$, for $m$ large enough,
\[ \max_{1 \leq j \leq m} \frac{A_j^2}{s_n^2} \leq \epsilon^2~, \]
so
\[ \sum_{1 \leq j \leq m} \frac{A_j^2}{s_n^2} I \left\{\frac{A_j^2}{s_n^2} > \epsilon^2\right\} \leq \sum_{1 \leq j \leq m} \frac{A_j^2}{s_n^2} I \left\{\max_{1 \leq j \leq m} \frac{A_j^2}{s_n^2} > \epsilon^2\right\} = 0~. \]
As a result, the Lindeberg condition holds, and the conclusion follows.
\end{proof}

\begin{lemma} \label{lem:ratio}
Let $X_n, n \geq 1$ be a sequence of random variables such that $E[X_n] = 0$, $\var[X_n] = \sigma_n^2$ and $X_n / \sigma_n \stackrel{d}{\to} N(0, 1)$. Further suppose $\liminf_{n \to \infty} \sigma_n^2 > 0$, $\limsup_{n \to \infty} \sigma_n^2 < \infty$, and there exists a sequence of random variables $\hat \sigma_n^2, n \geq 1$ such that $\hat \sigma_n^2 - E[\hat \sigma_n^2] \stackrel{P}{\to} 0$. Then,
\begin{align*}
\liminf_{n \to \infty} P \left \{ \frac{X_n}{\hat \sigma_n} \leq x \right \} \geq  \Phi(x) & \text{ for } x > 0 \\
\limsup_{n \to \infty} P \left \{ \frac{X_n}{\hat \sigma_n} \leq x \right \} \leq  \Phi(x) & \text{ for } x < 0~,
\end{align*}
where $\Phi(x) = P \{N(0, 1) \leq x\}$, if and only if $\liminf_{n \to \infty} (E[\hat \sigma_n^2] - \sigma_n^2) \geq 0$.
\end{lemma}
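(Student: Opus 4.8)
The plan is to study the limiting behavior of $X_n/\hat\sigma_n = (X_n/\sigma_n)\cdot(\sigma_n/\hat\sigma_n)$ through a subsequence argument. Write $v_n^2 := E[\hat\sigma_n^2] \ge 0$, let $Z_n := X_n/\sigma_n \stackrel{d}{\to} N(0,1)$ by hypothesis, and set $W_n := \hat\sigma_n^2/\sigma_n^2$ together with the deterministic sequence $a_n := v_n^2/\sigma_n^2 \ge 0$. Since $\sigma_n^2$ is bounded away from $0$, the hypothesis $\hat\sigma_n^2 - v_n^2 \xrightarrow{P} 0$ gives $W_n - a_n = (\hat\sigma_n^2 - v_n^2)/\sigma_n^2 \xrightarrow{P} 0$, so the random ratio $\hat\sigma_n^2/\sigma_n^2$ is asymptotically pinned to the deterministic sequence $a_n$.

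First I would record an elementary equivalence that restates the target condition: because $\sigma_n^2$ lies between two positive constants for large $n$, writing $v_n^2 - \sigma_n^2 = \sigma_n^2(a_n - 1)$ and bounding $\sigma_n^2$ from above and below shows that $\liminf_n(v_n^2 - \sigma_n^2) \ge 0$ if and only if $\liminf_n a_n \ge 1$. It therefore suffices to prove that the two displayed coverage statements hold if and only if $\liminf_n a_n \ge 1$.

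The core is a subsequence analysis of $P\{X_n/\hat\sigma_n \le x\} = P\{Z_n \le x\sqrt{W_n}\}$. Along any subsequence I would extract a further subsequence with $a_n \to a \in [0,\infty]$ (compactness in the extended reals), so that $W_n \xrightarrow{P} a$. If $a \in (0,\infty)$, Slutsky gives $Z_n/\sqrt{W_n} \stackrel{d}{\to} N(0, 1/a)$ and hence $P\{X_n/\hat\sigma_n \le x\} \to \Phi(\sqrt a\, x)$. If $a = \infty$, then $\sqrt{W_n}\xrightarrow{P}\infty$ and, since $Z_n$ is tight, $X_n/\hat\sigma_n \xrightarrow{P} 0$, so the probability tends to $I\{x \ge 0\}$. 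The delicate case is $a = 0$, where $z \mapsto z/\sqrt w$ is discontinuous at $w = 0$ and Slutsky does not apply directly; here, for $x > 0$ I would use the truncation bound $P\{Z_n \le x\sqrt{W_n}\} \le P\{Z_n \le x\epsilon\} + P\{\sqrt{W_n} > \epsilon\}$, send $n \to \infty$ using $Z_n \stackrel{d}{\to} Z$ and $W_n \xrightarrow{P} 0$, and then let $\epsilon \downarrow 0$, to obtain $\limsup_n P\{X_n/\hat\sigma_n \le x\} \le \Phi(0) = 1/2$; the symmetric bound handles $x < 0$.

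It then remains to assemble the two directions using the monotonicity of $\Phi$. For sufficiency, suppose $\liminf_n a_n \ge 1$; it suffices to show every subsequential limit of $P\{X_n/\hat\sigma_n \le x\}$ is at least $\Phi(x)$ for $x>0$, since the $\liminf$ is the least such limit. Along any subsequence, pass to a further one with $a_n \to a$; then $a \ge \liminf_n a_n \ge 1$, and the limit of the probability is $\Phi(\sqrt a\, x) \ge \Phi(x)$ (equal to $1$ if $a = \infty$). The case $x < 0$ is symmetric and yields the $\limsup$ inequality. For necessity, suppose the coverage statements hold but $\liminf_n a_n < 1$; extracting a subsequence with $a_n \to a \in [0,1)$, for $x > 0$ the subsequential limit of the probability is $\Phi(\sqrt a\, x) < \Phi(x)$ when $a \in (0,1)$, or at most $1/2 < \Phi(x)$ when $a = 0$, contradicting $\liminf_n P\{X_n/\hat\sigma_n \le x\} \ge \Phi(x)$. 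I expect the main obstacle to be the careful handling of the degenerate ratio at $a = 0$ (and the tightness argument at $a = \infty$), where continuous-mapping/Slutsky fails and the truncation estimate must be run on both tails; the remainder is routine bookkeeping with subsequences.
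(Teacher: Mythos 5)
Your proof is correct, and although it rests on the same decomposition $X_n/\hat\sigma_n = (X_n/\sigma_n)\cdot(\sigma_n/\hat\sigma_n)$ and the same subsequence philosophy as the paper, it is organized genuinely differently. The paper never identifies limits of the coverage probabilities: for sufficiency it proves the one-sided bound $P\{\sigma_n/\hat\sigma_n > 1+\epsilon\}\to 0$ (by passing to subsequences along which $\hat\sigma_n^2 - E[\hat\sigma_n^2]\to 0$ almost surely), combines it with the union bound $P\{X_n/\hat\sigma_n > x(1+\epsilon)\}\le P\{X_n/\sigma_n > x\} + P\{\sigma_n/\hat\sigma_n > 1+\epsilon\}$, and lets $\epsilon\downarrow 0$; for necessity it constructs an explicit $\epsilon$ from the constants $c\le\sigma_n^2\le M$ and treats the cases $\liminf_n E[\hat\sigma_n^2]>0$ and $\liminf_n E[\hat\sigma_n^2]=0$ separately. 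You instead normalize by $\sigma_n^2$, translate the bias condition into $\liminf_n a_n\ge 1$ for $a_n = E[\hat\sigma_n^2]/\sigma_n^2$ (legitimate precisely because $\sigma_n^2$ is bounded above \emph{and} below), and then classify every subsequential limit $a\in[0,\infty]$ of $a_n$, identifying the exact limit of $P\{X_n/\hat\sigma_n\le x\}$ in each regime: $\Phi(\sqrt a\, x)$ by Slutsky when $a\in(0,\infty)$, the degenerate limit when $a=\infty$ (tightness of $X_n/\sigma_n$), and the truncation bound $\limsup_n P\{X_n/\hat\sigma_n \le x\} \le \Phi(0)$ when $a=0$. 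Your two boundary cases $a=0$ and $a=\infty$ play the same role as the paper's case split on $\liminf_n E[\hat\sigma_n^2]$. What your route buys: a single trichotomy from which both directions of the equivalence fall out, together with exact subsequential limits $\Phi(\sqrt a\, x)$ rather than one-sided bounds, which makes the source of the coverage failure when $a<1$ completely transparent. What the paper's route buys: it never requires $a_n$ to converge along subsequences nor invokes Slutsky's theorem, only $\epsilon$-bounds in probability, so it is marginally more elementary. One caveat, shared with the paper and hence not a gap: $X_n/\hat\sigma_n$ must be interpreted on the event $\{\hat\sigma_n^2 = 0\}$; the paper handles this by conditioning on $\{\hat\sigma_n^2 > c/2\}$ in the sufficiency direction, and in your $a=0$ case you should note that the truncation inequality holds under any sign convention for division by zero.
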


\begin{proof}
We will only prove the inequality when $x \geq 0$ and the other follows similarly. Because $\liminf_{n \to \infty} \sigma_n^2 > 0$, $\sigma_n^2 \ge c > 0$ for some $c > 0$, at least for $n$ large enough. Since further $\liminf_{n \to \infty} (E[\hat \sigma_n^2] - \sigma_n^2) \geq 0$, $E[\hat \sigma_n^2] \geq \frac{3}{4} c$ for $n$ large enough. Because further $\hat \sigma_n^2 - E[\hat \sigma_n^2] \stackrel{P}{\to} 0$, we have
\[ P \{\hat \sigma_n^2 > c / 2\} \to 1~. \]
Therefore, in what follows, all statements should be understood as conditioning on this event. In that case,
\begin{equation} \label{eq:x/v}
\frac{X_n}{\hat \sigma_n} = \frac{X_n}{\sigma_n} \frac{\sigma_n}{\hat \sigma_n}~.    
\end{equation}
Next, we claim that for each $\epsilon > 0$,
\begin{equation} \label{eq:v/v}
P \left \{ \frac{\sigma_n}{\hat \sigma_n} > 1 + \epsilon \right \} \to 0~.    
\end{equation}
Suppose for a moment that \eqref{eq:v/v} holds. Then, because $X_n / \sigma_n \stackrel{d}{\to} N(0, 1)$, $\sigma_n > 0$, and $\hat \sigma_n > 0$, \eqref{eq:x/v} implies that for each $x \geq 0$ and $\epsilon > 0$,
\[ P \left \{ \frac{X_n}{\hat \sigma_n} > x (1 + \epsilon) \right \} \leq P \left \{ \frac{X_n}{\sigma_n} > x \right \} + P \left \{ \frac{\sigma_n}{\hat \sigma_n} > 1 + \epsilon \right \} \to 1 - \Phi(x) \]
as $n \to \infty$. In other words, for $x > 0$ and $\epsilon > 0$,
\[ \limsup_{n \to \infty} P \left \{ \frac{X_n}{\hat \sigma_n} > x \right \} \leq 1 - \Phi(x / (1 + \epsilon))~. \]
The conclusion of the theorem follows by letting $\epsilon \to 0$ and noting $\Phi$ is continuous everywhere. It therefore suffices to prove \eqref{eq:v/v}. Because \eqref{eq:v/v} holds if and only if
\begin{equation} \label{eq:indicatorP}
I \left \{ \frac{\sigma_n}{\hat \sigma_n} > 1 + \epsilon \right \} \stackrel{P}{\to} 0~,    
\end{equation}
it suffices to prove that for each subsequence $\{n(k)\}$, there exists a further subsequence $\{n(k(\ell))\}$ along which the convergence in \eqref{eq:indicatorP} holds with probability one. Fix an arbitrary subsequence $\{n(k)\}$. Because $\liminf_{n \to \infty} \sigma_n^2 > 0$, $\liminf_{n \to \infty} (E[\hat \sigma_n^2] - \sigma_n^2) \geq 0$, and $\hat \sigma_n^2 - E[\hat \sigma_n^2] \stackrel{P}{\to} 0$, there exists a constant $c > 0$ and a further subsequence $\{n(k(\ell))\}$ along which $\sigma_{n(k(\ell))}^2 \geq c > 0$, $E[\hat \sigma_{n(k(\ell))}^2] \geq c / 2 > 0$, and $\hat \sigma_{n(k(\ell))}^2 - E[\hat \sigma_{n(k(\ell))}^2] \to 0$ with probability one. Furthermore,
\[ \limsup_{\ell \to \infty} \frac{\sigma_{n(k(\ell))}^2}{E[\hat \sigma_{n(k(\ell))}^2]} \leq 1~. \]
Along this subsequence,
\[ \limsup_{\ell \to \infty} \frac{\sigma_{n(k(\ell))}^2}{\hat \sigma_{n(k(\ell))}^2} = \limsup_{\ell \to \infty} \frac{\sigma_{n(k(\ell))}^2}{E[\hat \sigma_{n(k(\ell))}^2]} \frac{E[\hat \sigma_{n(k(\ell))}^2]}{\hat \sigma_{n(k(\ell))}^2} \leq 1~, \]
so \eqref{eq:indicatorP} holds with probability one, and hence \eqref{eq:v/v} follows.

Next, we show that $\liminf_{n \rightarrow \infty} (E[\hat{\sigma}_n^2] - \sigma_n^2) \ge 0$ is necessary for the result to hold. We present the proof for $x > 0$ while symmetric arguments work for $x < 0$. To that end, first suppose $\liminf_{n \rightarrow \infty} (E[\hat{\sigma}_n^2] - \sigma_n^2) = - c < 0$ and $\liminf_{n \to \infty} E[\hat \sigma_n^2] > 0$. Then, there exists a subsequence $\{n(k)\}$ along which $E[\hat \sigma_{n(k)}^2] < \sigma_{n(k)}^2 - c / 2$ and $E[\hat \sigma_{n(k)}^2] \ge c/2$. Because by assumption $\limsup_{n \to \infty} \sigma_n^2 < \infty$, there exists $M > 0$ such that $\sigma_{n(k)}^2 \leq M$ for all $k$. Define $\epsilon = \frac{M}{M - c / 2} - 1 > 0$. By arguing similarly as in \eqref{eq:indicatorP}, we can show
\begin{equation} \label{eq:indicator2}
P \left \{ \frac{\sigma_{n(k)}}{\hat \sigma_{n(k)}} \geq 1 + \epsilon \right \} \to 1~.
\end{equation}
If instead $\liminf_{n \to \infty} E[\hat \sigma_n^2] = 0$, then by passing to a subsequence along which $E[\hat \sigma_{n(k)}^2] \to 0$, we can construct a further subsequence along which
\[\frac{\sigma_{n(k(\ell))}}{\hat \sigma_{n(k(\ell))}}\rightarrow \infty~,\]
with probability one, and thus \eqref{eq:indicator2} still holds. Then,
\begin{align*}
P \left \{ \frac{X_{n(k)}}{\hat \sigma_{n(k)}} > x \right \} = P \left \{ \frac{X_{n(k)}}{\sigma_{n(k)}} \frac{\sigma_{n(k)}}{\hat \sigma_{n(k)}} > x \right \} & \geq P \left \{ \Big \{ \frac{X_{n(k)}}{\sigma_{n(k)}}  > \frac{x}{1 + \epsilon} \Big \} \cap \Big \{ \frac{\sigma_{n(k)}}{\hat \sigma_{n(k)}} \geq 1 + \epsilon \Big \} \right \} \\
& = P \left \{ \frac{X_{n(k)}}{\sigma_{n(k)}}  > \frac{x}{1 + \epsilon}\right \} - P \left \{ \Big \{ \frac{X_{n(k)}}{\sigma_{n(k)}}  > \frac{x}{1 + \epsilon} \Big \} \cap \Big \{ \frac{\sigma_{n(k)}}{\hat \sigma_{n(k)}} < 1 + \epsilon \Big \} \right \} \\
& \to 1 - \Phi(x / (1 + \epsilon)) > 1 - \Phi(x)~.
\end{align*}
Therefore,
\[ \liminf_{n \to \infty} P \left \{ \frac{X_n}{\hat \sigma_n} \leq x \right \} = 1 - \limsup _{n \to \infty} P \left \{ \frac{X_n}{\hat \sigma_n} > x \right \} \leq \Phi(x / (1 + \epsilon)) < \Phi(x)~, \]
and the desired result follows.
\end{proof}

\begin{lemma} \label{lem:quadratic}
Suppose $E[\hat \delta_n] = \delta_n$ and $A$ is deterministic. Then, $E[\hat \delta_n' A \hat \delta_n] = \delta_n' A \delta_n + \tr(A \var[\hat \delta_n])$.   
\end{lemma}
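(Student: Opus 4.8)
The plan is to exploit the fact that $\hat\delta_n' A \hat\delta_n$ is a scalar, so it equals its own trace, and then combine the linearity and cyclic invariance of the trace with the standard decomposition of a second-moment matrix into variance plus the outer product of the mean. This is the familiar ``trace trick'' for the expectation of a quadratic form, and the hypotheses ($A$ deterministic, $E[\hat\delta_n] = \delta_n$) are exactly what is needed to push the expectation through.

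First, I would write $\hat\delta_n' A \hat\delta_n = \tr(\hat\delta_n' A \hat\delta_n) = \tr(A \hat\delta_n \hat\delta_n')$, where the first equality holds because the quantity is a $1 \times 1$ scalar and the second uses the cyclic property of the trace. Taking expectations and interchanging $E$ with $\tr$, which is justified since $A$ is deterministic and the trace is a finite linear combination of entries, gives $E[\hat\delta_n' A \hat\delta_n] = \tr(A \, E[\hat\delta_n \hat\delta_n'])$. Next I would invoke the identity $E[\hat\delta_n \hat\delta_n'] = \var[\hat\delta_n] + E[\hat\delta_n] E[\hat\delta_n]' = \var[\hat\delta_n] + \delta_n \delta_n'$, where the final equality uses the hypothesis $E[\hat\delta_n] = \delta_n$. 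Substituting and splitting the trace by linearity yields $E[\hat\delta_n' A \hat\delta_n] = \tr(A \var[\hat\delta_n]) + \tr(A \delta_n \delta_n')$, and since $\tr(A \delta_n \delta_n') = \tr(\delta_n' A \delta_n) = \delta_n' A \delta_n$ is again a scalar, the claimed identity follows.

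This argument is entirely routine, so there is no substantial obstacle to overcome. The only points warranting minor care are applying the cyclic property of the trace correctly, so that the matrix $\hat\delta_n \hat\delta_n'$ (rather than the scalar $\hat\delta_n' \hat\delta_n$) appears inside the trace, and recording that the decomposition of the second moment into $\var[\hat\delta_n] + \delta_n\delta_n'$ is precisely where the unbiasedness hypothesis enters.
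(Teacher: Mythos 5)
Your proof is correct and uses essentially the same ingredients as the paper's: the cyclic/linearity properties of the trace and the unbiasedness hypothesis via the mean--variance decomposition. The only difference is cosmetic ordering (the paper centers $\hat\delta_n = \delta_n + (\hat\delta_n - \delta_n)$ first and applies the trace trick to the centered quadratic term, whereas you apply the trace trick first and then decompose $E[\hat\delta_n\hat\delta_n'] = \var[\hat\delta_n] + \delta_n\delta_n'$), so the two arguments are algebraically equivalent.
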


\begin{proof}
Note
\begin{align*}
E[\hat \delta_n' A \hat \delta_n] & = E[(\delta_n + \hat \delta_n - \delta_n)' A ( \delta_n + \hat \delta_n - \delta_n)] \\
& = \delta_n' A \delta_n + E[(\hat \delta_n - \delta_n)' A (\hat \delta_n - \delta_n)] \\
& = \delta_n' A \delta_n + \tr( E[(\hat \delta_n - \delta_n)' A (\hat \delta_n - \delta_n)]) \\
& = \delta_n' A \delta_n + E[\tr((\hat \delta_n - \delta_n)' A (\hat \delta_n - \delta_n))] \\
& = \delta_n' A \delta_n + E[\tr(A (\hat \delta_n - \delta_n) (\hat \delta_n - \delta_n)')] \\
& = \delta_n' A \delta_n + \tr(E[A (\hat \delta_n - \delta_n) (\hat \delta_n - \delta_n)']) \\
& = \delta_n' A \delta_n + \tr(A \var[\hat \delta_n])~,
\end{align*}
where the second equality follows because $E[\hat \delta_n - \delta_n] = 0$, the third follows because $E[(\hat \delta_n - \delta_n)' A (\hat \delta_n - \delta_n)]$ is a scalar, the fourth and sixth inequalities follow because the expectation operator is linear, and the fifth follows from $\tr(AB) = \tr(BA)$ as long as the matrices are commutative.
\end{proof}

\begin{lemma} \label{lem:blp}
Let $(Y, X)$ be random vectors where $Y$ lives in $\mathbf R$ and $X$ lives in $\mathbf R^k$. Then,
\begin{enumerate}[\rm (a)]
    \item $E[\blp(Y | 1, X)] = E[Y]$.
    \item $(X - E[X])' \var[X]^{-1} \cov[X, Y] = \blp(Y | 1, X) - E[\blp(Y | 1, X)]$.
    \item $E[(Y - E[Y] - (\blp(Y | 1, X) - E[\blp(Y | 1, X)])) (\blp(Y | 1, X) - E[\blp(Y | 1, X)])] = 0$ and therefore $\cov[Y, \blp(Y|1, X)] = \var[\blp(Y|1, X)]$.
    \item $E([E[Y | X] - E[Y] - (\blp(Y | 1, X) - E[\blp(Y | 1, X)])) (\blp(Y | 1, X) - E[\blp(Y | 1, X)])] = 0$ and therefore $\cov[E[Y | X], \blp(Y| 1, X)] = \var[\blp(Y|1, X)]$ and $\var[E[Y | X]] = \var[E[Y | X] - \blp(Y|1, X)] + \var[\blp(Y|1, X)]$.
\end{enumerate}
\end{lemma}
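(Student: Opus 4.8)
The plan is to work directly from the normal equations characterizing the best linear predictor. Writing $\blp(Y \mid 1, X) = a + b'X$, the defining first-order conditions for minimizing $E[(Y - a - b'X)^2]$ are $E[Y - a - b'X] = 0$ and $E[(Y - a - b'X)X] = 0$. Solving these yields $b = \var[X]^{-1}\cov[X, Y]$ and $a = E[Y] - b'E[X]$, so that
\[ \blp(Y \mid 1, X) = E[Y] + (X - E[X])'\var[X]^{-1}\cov[X, Y]~. \]
Parts (a) and (b) are then immediate from this identity: taking expectations gives (a) because $E[X - E[X]] = 0$, and subtracting (a) from the identity gives (b).

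For part (c) I would first observe that, since $E[\blp(Y \mid 1, X)] = E[Y]$ by (a), the centered residual $Y - E[Y] - (\blp(Y\mid 1, X) - E[\blp(Y \mid 1, X)])$ equals the ordinary residual $Y - \blp(Y \mid 1, X)$, while the centered predictor $\blp(Y\mid 1, X) - E[\blp(Y\mid 1, X)] = (X - E[X])'b$ is a mean-zero affine function of $X$. The two normal equations together give $E[(Y - \blp(Y\mid 1, X))(X - E[X])] = 0$, and hence $E[(Y - \blp(Y\mid 1, X))(\blp(Y\mid 1, X) - E[\blp(Y\mid 1, X)])] = b'E[(Y - \blp(Y\mid 1, X))(X - E[X])] = 0$, which is the first claim. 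The consequence $\cov[Y, \blp(Y\mid 1, X)] = \var[\blp(Y\mid 1, X)]$ then follows by expanding the product and rearranging the resulting identity.

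Part (d) reduces to (c) via the tower property. The key observation is that $\blp(Y\mid 1, X)$, and therefore its centered version, is a measurable function of $X$ alone; consequently $E[(Y - E[Y \mid X])(\blp(Y\mid 1, X) - E[\blp(Y\mid 1, X)])] = 0$ by iterated expectations. Subtracting this from the identity established in (c), and using $E[E[Y\mid X]] = E[Y]$, yields $E[(E[Y\mid X] - E[Y] - (\blp(Y\mid 1, X) - E[\blp(Y\mid 1, X)]))(\blp(Y\mid 1, X) - E[\blp(Y\mid 1, X)])] = 0$. The identity $\cov[E[Y\mid X], \blp(Y\mid 1, X)] = \var[\blp(Y\mid 1, X)]$ follows by the same expansion as in (c), and the Pythagorean decomposition then follows because $\cov[E[Y\mid X] - \blp(Y\mid 1, X), \blp(Y\mid 1, X)] = \cov[E[Y\mid X], \blp(Y\mid 1, X)] - \var[\blp(Y\mid 1, X)] = 0$.

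I expect no substantive obstacle, as these are standard $L^2$-projection facts; the only point requiring care is the bookkeeping in part (d), where replacing $Y$ by $E[Y\mid X]$ must be justified by the $X$-measurability of $\blp(Y\mid 1, X)$ so that the tower property applies.
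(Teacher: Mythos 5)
Your proof is correct and follows essentially the same route as the paper's: both rest on the normal equations defining the best linear predictor (orthogonality of the residual to $1$ and $X$) for parts (a)--(c), and on iterated expectations—using that $\blp(Y\mid 1,X)$ is a function of $X$—to reduce (d) to (c). The paper's write-up is just terser (it dispatches (d) with "follows similarly"), whereas you spell out the tower-property step explicitly; there is no substantive difference in approach.
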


\begin{proof}
(a) follows because $E[(Y - \blp(Y | 1, X)) \cdot 1] = 0$. (b) follows because if $\blp(Y | 1, X) = \beta_0 + X' \beta_1$ then $\beta_1 = \var[X]^{-1} \cov[X, Y]$. (c) follows because recalling $E[\blp(Y | 1, X)] = E[Y]$, we have
\begin{align*}
& E[(Y - E[Y] - (\blp(Y | 1, X) - E[\blp(Y | 1, X)])) (\blp(Y | 1, X) - E[\blp(Y | 1, X)])] \\
& = E[(Y - \blp(Y | 1, X)) (\blp(Y | 1, X) - E[Y])] \\
& = E[(Y - \blp(Y | 1, X)) \blp(Y | 1, X)] + E[Y - \blp(Y | 1, X)] E[Y] \\
& = 0~.
\end{align*}
(d) follows similarly.
\end{proof}

\newpage
\bibliography{finpop}

\begin{thebibliography}{28}
\expandafter\ifx\csname natexlab\endcsname\relax\def\natexlab#1{#1}\fi
\expandafter\ifx\csname url\endcsname\relax
  \def\url#1{\texttt{#1}}\fi
\expandafter\ifx\csname urlprefix\endcsname\relax\def\urlprefix{URL }\fi
\providecommand{\eprint}[2][]{\url{#2}}

\bibitem[{Abadie and Imbens(2008)}]{abadie2008estimation}
\textsc{Abadie, A.} and \textsc{Imbens, G.~W.} (2008).
\newblock Estimation of the {Conditional} {Variance} in {Paired} {Experiments}.
\newblock \textit{Annales d'Économie et de Statistique} 175--187.

\bibitem[{Aronow et~al.(2014)Aronow, Green and Lee}]{aronow2014sharp}
\textsc{Aronow, P.~M.}, \textsc{Green, D.~P.} and \textsc{Lee, D. K.~K.} (2014).
\newblock Sharp {Bounds} on the {Variance} in {Randomized} {Experiments}.
\newblock \textit{The Annals of Statistics}, \textbf{42} 850--871.

\bibitem[{Bai(2022)}]{bai2022optimality}
\textsc{Bai, Y.} (2022).
\newblock Optimality of {Matched}-{Pair} {Designs} in {Randomized} {Controlled} {Trials}.
\newblock \textit{American Economic Review}, \textbf{112} 3911--3940.

\bibitem[{Bai et~al.(2024{\natexlab{a}})Bai, Guo, Shaikh and Tabord-Meehan}]{bai2024inference-b}
\textsc{Bai, Y.}, \textsc{Guo, H.}, \textsc{Shaikh, A.~M.} and \textsc{Tabord-Meehan, M.} (2024{\natexlab{a}}).
\newblock Inference in {Experiments} with {Matched} {Pairs} and {Imperfect} {Compliance}.
\newblock \textit{Journal of Business \& Economic Statistics}, \textbf{0} 1--22.

\bibitem[{Bai et~al.(2024{\natexlab{b}})Bai, Jiang, Romano, Shaikh and Zhang}]{bai2024covariate}
\textsc{Bai, Y.}, \textsc{Jiang, L.}, \textsc{Romano, J.~P.}, \textsc{Shaikh, A.~M.} and \textsc{Zhang, Y.} (2024{\natexlab{b}}).
\newblock Covariate adjustment in experiments with matched pairs.
\newblock \textit{Journal of Econometrics}, \textbf{241} 105740.

\bibitem[{Bai et~al.(2024{\natexlab{c}})Bai, Liu, Shaikh and Tabord-Meehan}]{bai2024inferencecluster}
\textsc{Bai, Y.}, \textsc{Liu, J.}, \textsc{Shaikh, A.~M.} and \textsc{Tabord-Meehan, M.} (2024{\natexlab{c}}).
\newblock Inference in cluster randomized trials with matched pairs.
\newblock \textit{Journal of Econometrics}, \textbf{245} 105873.

\bibitem[{Bai et~al.(2025)Bai, Liu, Shaikh and Tabord-Meehan}]{bai2025efficiency}
\textsc{Bai, Y.}, \textsc{Liu, J.}, \textsc{Shaikh, A.~M.} and \textsc{Tabord-Meehan, M.} (2025).
\newblock On the {Efficiency} of {Finely} {Stratified} {Experiments}.
\newblock ArXiv:2307.15181 [econ], \urlprefix\url{http://arxiv.org/abs/2307.15181}.

\bibitem[{Bai et~al.(2024{\natexlab{d}})Bai, Liu and Tabord-Meehan}]{bai2024inference}
\textsc{Bai, Y.}, \textsc{Liu, J.} and \textsc{Tabord-Meehan, M.} (2024{\natexlab{d}}).
\newblock Inference for {Matched} {Tuples} and {Fully} {Blocked} {Factorial} {Designs}.
\newblock \textit{Quantitative Economics}, \textbf{15} 279--330.

\bibitem[{Bai et~al.(2022)Bai, Romano and Shaikh}]{bai2022inference}
\textsc{Bai, Y.}, \textsc{Romano, J.~P.} and \textsc{Shaikh, A.~M.} (2022).
\newblock Inference in {Experiments} {With} {Matched} {Pairs}.
\newblock \textit{Journal of the American Statistical Association}, \textbf{117} 1726--1737.

\bibitem[{Cochran(1977)}]{cochran1977sampling}
\textsc{Cochran, W.~G.} (1977).
\newblock Sampling techniques.
\newblock \textit{Johan Wiley \& Sons Inc}.

\bibitem[{Cytrynbaum(2024{\natexlab{a}})}]{cytrynbaum2024covariate}
\textsc{Cytrynbaum, M.} (2024{\natexlab{a}}).
\newblock Covariate adjustment in stratified experiments.
\newblock \textit{Quantitative Economics}, \textbf{15} 971--998.

\bibitem[{Cytrynbaum(2024{\natexlab{b}})}]{cytrynbaum2021optimal}
\textsc{Cytrynbaum, M.} (2024{\natexlab{b}}).
\newblock Optimal stratification of survey experiments.
\newblock \textit{arXiv preprint arXiv:2111.08157}.

\bibitem[{De~Chaisemartin and Ramirez-Cuellar(2024)}]{de2024level}
\textsc{De~Chaisemartin, C.} and \textsc{Ramirez-Cuellar, J.} (2024).
\newblock At what level should one cluster standard errors in paired and small-strata experiments?
\newblock \textit{American Economic Journal: Applied Economics}, \textbf{16} 193--212.

\bibitem[{Ding(2017)}]{ding2017paradox}
\textsc{Ding, P.} (2017).
\newblock A {Paradox} from {Randomization}-{Based} {Causal} {Inference}.
\newblock \textit{Statistical Science}, \textbf{32} 331--345.

\bibitem[{Fisher(1935)}]{fisher1935design}
\textsc{Fisher, R.~A.} (1935).
\newblock The design of experiments.
\newblock \textit{The design of experiments.}

\bibitem[{Fogarty(2018{\natexlab{a}})}]{fogarty2018mitigating}
\textsc{Fogarty, C.~B.} (2018{\natexlab{a}}).
\newblock On mitigating the analytical limitations of finely stratified experiments.
\newblock \textit{Journal of the Royal Statistical Society: Series B (Statistical Methodology)}, \textbf{80} 1035--1056.

\bibitem[{Fogarty(2018{\natexlab{b}})}]{fogarty2018regression-assisted}
\textsc{Fogarty, C.~B.} (2018{\natexlab{b}}).
\newblock Regression-assisted inference for the average treatment effect in paired experiments.
\newblock \textit{Biometrika}, \textbf{105} 994--1000.

\bibitem[{Greevy et~al.(2004)Greevy, Lu, Silber and Rosenbaum}]{greevy2004optimal}
\textsc{Greevy, R.}, \textsc{Lu, B.}, \textsc{Silber, J.~H.} and \textsc{Rosenbaum, P.} (2004).
\newblock Optimal multivariate matching before randomization.
\newblock \textit{Biostatistics}, \textbf{5} 263--275.

\bibitem[{H{\'a}jek(1960)}]{hajek1960limiting}
\textsc{H{\'a}jek, J.} (1960).
\newblock Limiting distributions in simple random sampling from a finite population.
\newblock \textit{Publications of the Mathematical Institute of the Hungarian Academy of Sciences}, \textbf{5} 361--374.

\bibitem[{Imai(2008)}]{imai2008variance}
\textsc{Imai, K.} (2008).
\newblock Variance identification and efficiency analysis in randomized experiments under the matched-pair design.
\newblock \textit{Statistics in medicine}, \textbf{27} 4857--4873.

\bibitem[{Imai et~al.(2009)Imai, King and Nall}]{imai2009essential}
\textsc{Imai, K.}, \textsc{King, G.} and \textsc{Nall, C.} (2009).
\newblock The {Essential} {Role} of {Pair} {Matching} in {Cluster}-{Randomized} {Experiments}, with {Application} to the {Mexican} {Universal} {Health} {Insurance} {Evaluation}.
\newblock \textit{Statistical Science}, \textbf{24} 29--53.

\bibitem[{Imbens and Rubin(2015)}]{imbens2015causal}
\textsc{Imbens, G.~W.} and \textsc{Rubin, D.~B.} (2015).
\newblock \textit{Causal inference in statistics, social, and biomedical sciences}.
\newblock Cambridge university press.

\bibitem[{Jiang et~al.(2024)Jiang, Liu, Phillips and Zhang}]{jiang2024bootstrap}
\textsc{Jiang, L.}, \textsc{Liu, X.}, \textsc{Phillips, P.~C.} and \textsc{Zhang, Y.} (2024).
\newblock Bootstrap inference for quantile treatment effects in randomized experiments with matched pairs.
\newblock \textit{Review of Economics and Statistics}, \textbf{106} 542--556.

\bibitem[{Li and Ding(2017)}]{li2017general}
\textsc{Li, X.} and \textsc{Ding, P.} (2017).
\newblock General forms of finite population central limit theorems with applications to causal inference.
\newblock \textit{Journal of the American Statistical Association}, \textbf{112} 1759--1769.

\bibitem[{Liu and Yang(2020)}]{liu2020regression-adjusted}
\textsc{Liu, H.} and \textsc{Yang, Y.} (2020).
\newblock Regression-adjusted average treatment effect estimates in stratified randomized experiments.
\newblock \textit{Biometrika}, \textbf{107} 935--948.

\bibitem[{Pashley and Miratrix(2021)}]{pashley2021insights}
\textsc{Pashley, N.~E.} and \textsc{Miratrix, L.~W.} (2021).
\newblock Insights on variance estimation for blocked and matched pairs designs.
\newblock \textit{Journal of Educational and Behavioral Statistics}, \textbf{46} 271--296.

\bibitem[{Su and Ding(2021)}]{su2021model}
\textsc{Su, F.} and \textsc{Ding, P.} (2021).
\newblock Model-assisted analyses of cluster-randomized experiments.
\newblock \textit{Journal of the Royal Statistical Society Series B: Statistical Methodology}, \textbf{83} 994--1015.

\bibitem[{Zhu et~al.(2024)Zhu, Liu and Yang}]{zhu2024design-based}
\textsc{Zhu, K.}, \textsc{Liu, H.} and \textsc{Yang, Y.} (2024).
\newblock Design-{Based} {Theory} for {Lasso} {Adjustment} in {Randomized} {Block} {Experiments} and {Rerandomized} {Experiments}.
\newblock \textit{Journal of Business \& Economic Statistics}, \textbf{0} 1--12.

\end{thebibliography}
\end{document}